\ifdefined\XeTeXversion\else\pdfoutput=1\fi
\documentclass[11pt]{article}

\usepackage[margin=1in]{geometry}
\usepackage{iftex}
\ifPDFTeX
  \usepackage[T1]{fontenc}
  \usepackage{lmodern}
\fi
\usepackage{amsmath,amssymb,amsthm}
\usepackage{booktabs}
\usepackage{array}
\usepackage{tikz}
\usetikzlibrary{arrows.meta,positioning,plotmarks}
\usepackage{xcolor}
\usepackage[colorlinks=true,linkcolor=blue!55!black,citecolor=blue!55!black,urlcolor=blue!55!black]{hyperref}
\usepackage{xurl}
\hypersetup{pdftitle={Improving the lower bound on the Laughlin gap: An SDP search},pdfauthor={Qiang Gao}}

\newtheorem{theorem}{Theorem}[section]
\newtheorem{corollary}[theorem]{Corollary}
\newtheorem{lemma}[theorem]{Lemma}
\newtheorem{proposition}[theorem]{Proposition}
\theoremstyle{definition}
\newtheorem{definition}[theorem]{Definition}
\theoremstyle{remark}
\newtheorem{remark}[theorem]{Remark}

\numberwithin{equation}{section}

\newcommand{\Fock}{\mathcal{F}}
\newcommand{\Lift}{\mathcal{L}}
\newcommand{\Wh}{\widehat{W}}
\newcommand{\R}{\mathbb{R}}
\newcommand{\Z}{\mathbb{Z}}
\newcommand{\tr}{\operatorname{tr}}
\newcommand{\Bin}{\operatorname{Bin}}
\newcommand{\pbar}{\bar p}
\newcommand{\jbar}{\bar\jmath}
\newcommand{\ibar}{\bar\imath}
\newcommand{\tbar}{\bar t}
\newcommand{\GammaNum}{1401624220840445441298243497}
\newcommand{\GammaDen}{64\cdot 10^{26}}
\newcommand{\GammaDec}{0.2190037845063196}
\newcommand{\GammaOld}{0.0461673}

\title{Improving the lower bound on the Laughlin gap:\\ An SDP search}
\author{Qiang Gao}
\date{\today}

\begin{document}
\maketitle

\begin{abstract}
A recent preprint by OpenAI resolved the spherical fermionic Laughlin spectral-gap conjecture for the full $V_1$ interaction by a Fock-space inequality: $H_Q^2\ge \gamma H_Q$ on the lowest-Landau-level Fock space for all large flux $Q$ and every $\gamma$ below $0.0461673\ldots$, so that the gap above the Laughlin state at filling $1/3$ is at least $1/25$. That proof, with its original constant, has also been formalized in Lean. We show that the same argument supports a much larger constant. Two sharper estimates, a product-twirl bound for the three-body tail and a staircase price for the four-body allowance, raise the constant above $0.0601$ without changing the original certificate. Treating the auxiliary operators as the variables of a semidefinite program (SDP) and enlarging the finite set of orbital modes on which they act then gives $H_Q^2\ge\gamma H_Q$ for every $\gamma<\gamma_*=0.2190037845063196\ldots$, certified by exact rational arithmetic. Consequently the gap above the Laughlin state is at least $1/5$ for all sufficiently large spheres, and the untruncated planar $V_1$ Hamiltonian has positive spectrum bounded below by $\gamma_*$, uniformly in particle number and angular momentum. A heuristic extrapolation of the SDP optima suggests that auxiliary operators of the present form can give at most about $0.222$, while exact diagonalization of finite systems places the true gap near $0.4$. A larger family of auxiliary operators could push the bound further towards the true gap, and the SDP formulation makes such families easy to test. The proof of the improvement and the SDP search were completed by the AI model Claude Opus 5.5 (Anthropic) under the author's direction.
\end{abstract}

\section{The bound and its improvement}

The $V_1$ pseudopotential Hamiltonian is the simplest parent Hamiltonian of Laughlin's state at filling $1/3$ \cite{Laughlin1983,Haldane1983}: every pair of fermions pays energy one in relative angular momentum one and nothing otherwise. That its positive spectrum stays uniformly away from zero is a long-standing conjecture, supported by finite-size studies since the work of Haldane and Rezayi \cite{HaldaneRezayi1985} and recorded in Rougerie's survey \cite[Appendix~A]{Rougerie2019}. On the round sphere it has now been proved \cite{FockLaughlin}, and that proof, with its original constant, has been formalized in Lean \cite{FockLaughlinLean}. The proof establishes the operator inequality $H_Q^2\ge\gamma H_Q$ on the entire lowest-Landau-level Fock space, by comparing the normal-ordered square of $H_Q$ with a rotation-averaged sum of squares of auxiliary operators, and it obtains every $\gamma$ below
\[
  \gamma_*^{\mathrm{old}}=\frac{4616733319001}{10^{14}}=\GammaOld\ldots,
\]
whence a gap of at least $1/25$ above the Laughlin state.

The purpose of this note is quantitative. We keep the architecture of \cite{FockLaughlin}: the normal-ordered square, the spin-block comparisons, the exact diagonal change of variables that transfers four-body positivity to finite spheres, and the summable three-body tail. We improve each of the constants that enter the final bound, and we enlarge the finite set of orbital modes on which the auxiliary operators act. The resulting constant is about $4.7$ times larger.

\subsection{Normalization and main results}

We use the conventions of \cite{FockLaughlin} throughout. For an integer $Q\ge1$, $U_Q$ is the spin-$Q/2$ representation of $SU(2)$, the one-particle lowest Landau level on the sphere with $Q$ flux quanta. For $2\le N\le Q+1$,
\begin{equation}\label{eq:H}
  H_{N,Q}=\sum_{1\le i<j\le N}P^{(1)}_{ij}\Big|_{\bigwedge^N U_Q},
\end{equation}
where $P^{(1)}_{ij}$ projects onto pair spin $Q-1$ in tensor positions $i,j$, with coefficient one for each unordered pair. With $H_{0,Q}=H_{1,Q}=0$ put
\[
  \Fock_Q=\bigoplus_{N=0}^{Q+1}{\textstyle\bigwedge^N}U_Q,\qquad H_Q=\bigoplus_{N=0}^{Q+1}H_{N,Q},
\]
and
\[
  \gamma_*=\frac{\GammaNum}{\GammaDen}=\GammaDec\ldots>\frac15 .
\]

\begin{theorem}\label{thm:main}
For every $0<\gamma<\gamma_*$ there is an integer $Q_\gamma$ such that for every integer $Q\ge Q_\gamma$,
\begin{equation}\label{eq:main}
  H_Q^2\ \ge\ \gamma H_Q\qquad\text{on }\Fock_Q .
\end{equation}
The threshold $Q_\gamma$ is independent of particle number.
\end{theorem}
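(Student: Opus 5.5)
We follow the architecture of \cite{FockLaughlin} and change only the constants and the size of the finite mode set. First, write $H_Q$ in second quantization on $\Fock_Q$ as $H_Q=\sum_{m}B_m^*B_m$, where $B_m$ annihilates a pair in relative angular momentum one with total $L_z=m$. Then expand
\[
  H_Q^2 = {:}H_Q^2{:} + (\text{three-body term}) + H_Q ,
\]
splitting off the normal-ordered four-body part, the three-body part (one shared index), and the quadratic part, which equals $H_Q$ itself up to the pair normalization. The aim is
\[
  {:}H_Q^2{:} + T_3 \ \ge\ -(1-\gamma)H_Q
\]
for every $\gamma<\gamma_*$ once $Q$ is large. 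We reduce this to an inequality on a fixed finite window of orbital modes. To do so we average over rotations, using $SU(2)$ invariance of $H_Q$, and compare against a rotation-averaged sum of squares $\int_{SU(2)} g\,\big(\sum_k C_k^*C_k\big)g^{-1}\,dg$. Here the auxiliary operators $C_k$ are polynomials in the annihilators supported on a finite window of modes near the north pole.

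The steps are as follows.
\begin{enumerate}
\item Control the three-body tail (modes far from the window) by the product-twirl bound. This gives a summable geometric error whose total, as $Q\to\infty$, is an explicit fraction of $H_Q$.
\item Control the four-body allowance outside the window by the staircase price, so that the remaining four-body contribution is bounded below by a local quadratic form.
\item Pass to the planar limit through the exact diagonal change of variables of \cite{FockLaughlin}. The finite-sphere coefficients converge to the planar ones with $O(1/Q)$ corrections, uniform in particle number because every estimate is an operator inequality on Fock space built from pair operators.
\item On the finite window, the required local inequality becomes a matrix inequality $M(\gamma)-\sum_k \hat C_k^*\hat C_k\succeq 0$, with the $C_k$ as SDP variables. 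An exact rational certificate at $\gamma_*$ (a Gram matrix plus slack written as an exact $LDL^T$ decomposition) proves it for $\gamma=\gamma_*$, and hence for all smaller $\gamma$ by monotonicity.
\item Strictness $\gamma<\gamma_*$ leaves a margin $\gamma_*-\gamma$ that absorbs the $O(1/Q)$ finite-sphere errors and the tail remainders. This defines $Q_\gamma$, which is independent of $N$.
\end{enumerate}

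The main obstacle is step 4 together with its interface with step 3. The SDP must be posed on a symmetry-reduced basis (fixed $L_z$ and particle-number sectors of the window) small enough to solve, yet large enough to exceed $1/5$. The floating-point optimum must then be rounded to a rational positive-definite certificate without losing the margin. To manage this, we would first project the numerical Gram matrix onto the affine constraint space with exact rationals. We would then back off $\gamma$ slightly below the numerical optimum so that the smallest eigenvalue stays strictly positive, and verify with an exact Cholesky factorization. The stated $\gamma_*$ is exactly such a backed-off rational value.
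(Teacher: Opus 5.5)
Your skeleton is the paper's: the normal-ordered square, a rotation-averaged sum of squares, finitely many limiting block conditions, an exact certificate, and a margin $\gamma_*-\gamma$ absorbing $o(1)H_Q$. The gap is that the auxiliary operators you describe cannot make the comparison close.

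\textbf{The auxiliary operators.} Suppose the $C_k$ are polynomials in annihilators. Average over $e^{i\theta\hat N}$, which fixes $H_Q^2-\gamma H_Q$ and removes particle-number-changing cross terms. Then every square is $\Lift_n(|\phi\rangle\langle\phi|)$ for a single $n$, a positive $n$-body operator, and subtracting it can only hurt. The three-body part of $H_Q^2$ is negative on every odd block $z\ge5$, since $q_z^\infty=(3z-1)(-\tfrac12)^z<0$. That negativity would then have to be paid with pair energy alone. The tail bound of Lemma~\ref{lem:tailsum}, started at $Z=4$, charges $\delta(4)\approx3.2$ for this with the constants of Lemma~\ref{lem:tail}, and about $7$ with $c_z=z+1$. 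Either is far more than the available $1$.

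The missing idea is the form $F_\rho=\lambda_\rho B_{t_\rho}+\sum_{(p,j)}\alpha^\rho_{pj}c^\dagger_{p+j-t_\rho}c_jB_p$, whose hopping terms contain a creation operator.
\begin{itemize}
\item In $F_\rho^\dagger F_\rho$, the cross terms between $\lambda_\rho B_{t_\rho}$ and the hopping terms make the three-body traces $e_z$ negative, down to $-\lambda_\rho^2\sum_Ts_{z,T,t_\rho}^2$, at the price $\eta=\sum_\rho\lambda_\rho^2$.
\item The hopping--hopping products, reordered by $c_ic_k^\dagger=\delta_{ik}-c_k^\dagger c_i$, give a positive three-body contraction and four-body terms $E_D$.
\item The $E_D$ must fit under the single copy $2\chi$ supplied by $\Lift_4(W_4W_4^\dagger)$.
\end{itemize}
The certificate trades four-body positivity for three-body negativity, and that trade is what the SDP optimizes.

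\textbf{The four-body side.} Your steps 2, 4 and 5 misplace it.
\begin{itemize}
\item Nothing lies outside the window: the averaged four-body part lives on blocks $D\le T_4$, and $\Lift_4(W_4W_4^\dagger)\ge0$ covers the rest. The allowances $\varepsilon_DI$ sit inside the window, at $D=13,14,15,17,20,37,39$ for the main certificate.
\item $\gamma$ enters none of the conditions $e_z<\frac32q_z^\infty$ and $G_D(2\chi-E_D+\varepsilon_DI)G_D\ge0$. It is the output $1-\eta-\delta(T_3)-\pi(\varepsilon)$, so backing it off buys no strictness in those blocks.
\item The four-body conditions are genuinely non-strict. $G_D$ may be singular, some $M_D$ have exact zero eigenvalues on $\operatorname{ran}G_D$, and each allowance of the main certificate is tight to $10^{-13}$. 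So a Cholesky test is unavailable; the paper uses exact $LDL^{\mathsf T}$ elimination and signs of characteristic coefficients.
\end{itemize}

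What makes non-strict limiting positivity enough is the transfer lemma of \cite{FockLaughlin}, run on the local Fock space of modes $0,\dots,T_4+1$. It gives $A_4\le\Lift_4(W_4W_4^\dagger)+(\pi(\varepsilon)+o(1))H_Q$, uniformly in particle number. Your step 5 needs exactly this, because an $O(1/Q)$ error in four-body coefficients is absorbed by $(\gamma_*-\gamma)H_Q$ only after it has been bounded by pair energy. No planar limit enters.

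\textbf{The constant.} The paper proves the theorem with $c_z=z+1$ and the block-by-block price. $\gamma_*$ is exactly $\gamma^{\mathrm B}_{\mathrm{cert}}$ of the $(17,18,18,17)$ certificate, fed into Theorem~\ref{thm:framework}(a). The product-twirl and staircase estimates are harmless but unnecessary. They would raise the same certificate to about $0.2190170$, so $\gamma_*$ is not a backed-off value under them.
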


As in \cite{FockLaughlin}, the spectrum of $H_Q$ is then contained in $\{0\}\cup[\gamma,\infty)$. At the Laughlin flux the zero mode is unique \cite[Lemma~6.2]{FockLaughlin}, which gives the usual ground-state gap.

\begin{corollary}\label{cor:laughlin}
There is an integer $N_0\ge2$ such that, for every $N\ge N_0$ and $Q=3(N-1)$,
\begin{equation}\label{eq:gap}
  H_{N,Q}\ \ge\ \frac15\,\bigl(I-P_{\mathrm L,N}\bigr)\qquad\text{on }{\textstyle\bigwedge^N}U_Q,
\end{equation}
where $P_{\mathrm L,N}$ projects onto the line spanned by $\Psi_{\mathrm L,N}=\prod_{i<j}(u_iv_j-u_jv_i)^3$.
\end{corollary}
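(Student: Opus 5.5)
The corollary follows from Theorem~\ref{thm:main} by a spectral argument, plus the uniqueness of the Laughlin zero mode quoted from \cite[Lemma~6.2]{FockLaughlin}.

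First fix one admissible constant. Since $\gamma_*>\frac15$, take $\gamma=\frac15$ in Theorem~\ref{thm:main}, which gives an integer $Q_{1/5}$. Then choose $N_0\ge2$ with $3(N_0-1)\ge Q_{1/5}$.

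For $N\ge N_0$, set $Q=3(N-1)\ge Q_{1/5}$. We have $N\le Q+1$, since $3(N-1)\ge N-1$, so $\bigwedge^N U_Q$ is one of the summands of $\Fock_Q$. The operator $H_Q$ is block diagonal in particle number, so inequality \eqref{eq:main} restricts to the summand:
\[
  H_{N,Q}^2\ \ge\ \tfrac15\,H_{N,Q}\qquad\text{on }\textstyle\bigwedge^N U_Q .
\]
This is the step where the theorem's uniformity in particle number is used.

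Next, pass to the spectrum. $H_{N,Q}$ is a finite-dimensional self-adjoint operator and is nonnegative, being a sum of projections. Take an eigenvector with eigenvalue $\lambda\ge0$. The inequality gives $\lambda^2\ge\lambda/5$, so either $\lambda=0$ or $\lambda\ge\frac15$. By the spectral theorem,
\[
  H_{N,Q}\ \ge\ \tfrac15\,(I-P_0),
\]
where $P_0$ is the projection onto $\ker H_{N,Q}$.

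It remains to identify $\ker H_{N,Q}$ with the line spanned by $\Psi_{\mathrm L,N}$, so that $P_0=P_{\mathrm L,N}$.

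The inclusion $\Psi_{\mathrm L,N}\in\ker H_{N,Q}$ is standard. The pair factor $(u_iv_j-u_jv_i)^3$ forces relative angular momentum at least $3$ between particles $i$ and $j$. Hence $P^{(1)}_{ij}\Psi_{\mathrm L,N}=0$ for every pair, and $\Psi_{\mathrm L,N}$ is nonzero.

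The reverse inclusion, that the kernel is exactly one-dimensional at flux $Q=3(N-1)$, is \cite[Lemma~6.2]{FockLaughlin}, which we take as given. It is the only non-formal input. For intuition: a zero mode of $H_{N,Q}$ vanishes to order at least $3$ as any two particles coincide, so it is $\Psi_{\mathrm L,N}$ times a symmetric polynomial. Degree counting at $Q=3(N-1)$ forces that polynomial to be constant.

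With $P_0=P_{\mathrm L,N}$, this gives \eqref{eq:gap} for every $N\ge N_0$.
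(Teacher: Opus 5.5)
Your proposal is correct and follows the paper's own proof: take $\gamma=\frac15<\gamma_*$ in Theorem~\ref{thm:main}, restrict to the $N$-particle sector, use the spectral theorem to get $H_{N,Q}\ge\frac15(I-P_{\ker H_{N,Q}})$, identify the kernel at $Q=3(N-1)$ with the Laughlin line via \cite[Lemma~6.2]{FockLaughlin}, and choose $N_0$ with $3(N_0-1)\ge Q_{1/5}$. Your extra checks ($N\le Q+1$, nonnegativity of $H_{N,Q}$, the eigenvalue dichotomy) only spell out details the paper leaves implicit.
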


\begin{corollary}\label{cor:planar}
For every $N\ge2$ the untruncated planar $V_1$ Hamiltonian $H_N^{\mathrm{pl}}$ of \cite[Section~7]{FockLaughlin} satisfies $(H_N^{\mathrm{pl}})^2\ge\gamma_*H_N^{\mathrm{pl}}$. Equivalently $H_N^{\mathrm{pl}}\ge\gamma_*(I-P^{\mathrm{pl}}_{0,N})$, where $P^{\mathrm{pl}}_{0,N}$ projects onto $\ker H_N^{\mathrm{pl}}$, uniformly in $N$ and in the total angular momentum. In particular the fermionic case $\ell=3$ (filling $1/3$) of the spectral-gap conjecture \cite[Conjecture~A.1]{Rougerie2019} holds, in its pair-projector normalization, with constant $\gamma_*>1/5$.
\end{corollary}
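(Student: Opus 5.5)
The statement to prove is Corollary~\ref{cor:planar}. I would deduce it from Theorem~\ref{thm:main} by the planar limit used in \cite[Section~7]{FockLaughlin}. Fix $N\ge2$ and a total angular momentum sector of $H_N^{\mathrm{pl}}$. Every such sector is finite dimensional, because the relative and center-of-mass angular momenta are bounded by the total angular momentum $M$. So it suffices to prove $(H_N^{\mathrm{pl}})^2\ge\gamma H_N^{\mathrm{pl}}$ on each sector $\mathcal H_{N,M}$ for each $\gamma<\gamma_*$, and then let $\gamma\uparrow\gamma_*$. The inequality at $\gamma_*$ itself follows because a closed condition on the finite spectrum passes to the limit. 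The equivalent form $H_N^{\mathrm{pl}}\ge\gamma_*(I-P^{\mathrm{pl}}_{0,N})$ is the spectral statement that every eigenvalue $\lambda$ of the self-adjoint operator satisfies $\lambda^2\ge\gamma_*\lambda$, so $\lambda\in\{0\}\cup[\gamma_*,\infty)$. Uniformity in $N$ and $M$ is automatic, since the constant $\gamma$ does not depend on either.

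The key step is the comparison on the sphere. I would embed $\mathcal H_{N,M}$ into $\bigwedge^N U_Q$ near the north pole: the planar orbital $z^m$ corresponds to the $L_z$-eigenstate of $U_Q$ with $m$ quanta away from the top weight, rescaled by the stereographic factor $\sqrt{Q}$. By the convergence lemma of \cite[Section~7]{FockLaughlin}, the compressions of $H_{N,Q}$ to the image of this embedding converge in norm to $H_N^{\mathrm{pl}}|_{\mathcal H_{N,M}}$ as $Q\to\infty$, with $N$ and $M$ fixed. What I need there is that the pair projector $P^{(1)}$ onto pair spin $Q-1$ converges to the planar projector onto relative angular momentum one. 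The squares must converge as well. Since the sector is finite dimensional and the spherical Hamiltonian conserves $L_z$, I would argue that the image of $\mathcal H_{N,M}$ becomes asymptotically invariant under $H_{N,Q}$: the matrix elements of $H_{N,Q}$ that leave the sector with total $L_z$ fixed and reach larger individual orbital indices vanish in the limit, because $L_z$ conservation holds exactly and the fixed $L_z$ sector contains only finitely many configurations near the pole. If that is right, then $P H_{N,Q}^2 P\to (H^{\mathrm{pl}}_N)^2$ as well. Theorem~\ref{thm:main} gives $H_{N,Q}^2\ge\gamma H_{N,Q}$ for all $Q\ge Q_\gamma$. Compressing gives $PH_{N,Q}^2P\ge\gamma PH_{N,Q}P$, and passing to the limit gives the planar inequality.

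I expect the convergence of the squared operator to be the main obstacle. The naive bound $PH^2P\ge(PHP)^2$ points in the wrong direction for concluding the inequality for the limit operator directly. Two facts make it work. First, $L_z$ conservation on the sphere keeps $H_{N,Q}$ inside the total-$L_z$ sector. Second, within that sector the only states are those with orbital indices summing to $M$ measured from the pole, which is exactly the finite set covered by the embedding. So $P$ commutes with $H_{N,Q}$ exactly, and only the coefficient convergence of Clebsch--Gordan-type matrix elements remains; I would cite that from \cite[Section~7]{FockLaughlin}. The consequence for Rougerie's conjecture then follows by identifying the pair-projector normalization, as in \cite{FockLaughlin}.
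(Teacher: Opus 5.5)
Your proposal is correct and follows the paper's proof. You restrict to the finite-dimensional blocks of fixed $N$ and total degree $M$; for $Q\ge M$ these are exactly the rotation-invariant deficit-$M$ weight spaces of $\bigwedge^N U_Q$, so the projection commutes with $H_{N,Q}$ and the squares converge along with the operators. You then pass $H_{N,Q}^2\ge\gamma H_{N,Q}$ to the norm limit for each $\gamma<\gamma_*$ and let $\gamma\uparrow\gamma_*$ by closedness. Your intermediate worry about merely asymptotic invariance can be dropped, since, as you yourself conclude, $L_z$ conservation makes the invariance exact.
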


\begin{corollary}\label{cor:charge}
With $N_0$ as in Corollary~\ref{cor:laughlin}, $Q_N=3(N-1)$ and $E_0(n;Q_N)$ the lowest eigenvalue of $H_{n,Q_N}$, the charge gap $\Delta_N=E_0(N+1;Q_N)+E_0(N-1;Q_N)-2E_0(N;Q_N)$ satisfies $\Delta_N\ge N/[5(N-1)]$ for $N\ge N_0$.
\end{corollary}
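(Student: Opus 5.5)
The plan is to bound the three terms of $\Delta_N$ separately. The two terms with $N$ and $N-1$ particles are easy, and the real work is a lower bound on $E_0(N+1;Q_N)$ that is better than $1/5$ by the factor $N/(N-1)$. Since $H_{n,Q}\ge0$, we have $E_0(N-1;Q_N)\ge0$. At the Laughlin flux $\Psi_{\mathrm L,N}$ is a zero mode (every pair vanishes to order three), so $E_0(N;Q_N)=0$. Hence $\Delta_N\ge E_0(N+1;Q_N)$, and it suffices to show $H_{N+1,Q_N}\ge \frac{N}{5(N-1)}$ on $\bigwedge^{N+1}U_{Q_N}$ for $N\ge N_0$.

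The key step is a combinatorial averaging identity. Each unordered pair $\{i,j\}\subset\{1,\dots,N+1\}$ lies in exactly $N-1$ of the $N+1$ subsets of size $N$ obtained by deleting one index $k$. Write $H^{(k)}_N$ for the $N$-particle Hamiltonian acting on all positions except $k$, tensored with the identity in position $k$. Then
\[
  H_{N+1,Q_N}=\frac{1}{N-1}\sum_{k=1}^{N+1}H^{(k)}_N \quad\text{on }{\textstyle\bigwedge^{N+1}}U_{Q_N}.
\]
Antisymmetry in the remaining $N$ positions is preserved, so Corollary~\ref{cor:laughlin} applies to each term and gives $H^{(k)}_N\ge\frac15(I-P^{(k)}_{\mathrm L})$, where $P^{(k)}_{\mathrm L}$ is $P_{\mathrm L,N}$ placed on the positions other than $k$. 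Summing,
\[
  H_{N+1,Q_N}\ \ge\ \frac{1}{5(N-1)}\Bigl(N+1-\sum_{k}P^{(k)}_{\mathrm L}\Bigr)
\]
on the antisymmetric subspace.

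The main obstacle is to show $X:=\sum_kP^{(k)}_{\mathrm L}\le I$ on $\bigwedge^{N+1}U_{Q_N}$. A trivial norm bound would give only $N+1$, which is useless, so I would argue in second quantization. Restricted to fermions, $X=\sum_m a_m^*\,|\Psi\rangle\langle\Psi|\,a_m$ with $\Psi$ the normalized Laughlin state. This gives
\[
  \langle\phi,X\phi\rangle=\sum_m|\langle a_m^*\Psi,\phi\rangle|^2=\|A\phi\|^2,
\]
where $A\colon\bigwedge^{N+1}U_{Q_N}\to U_{Q_N}$ has components $(A\phi)_m=\langle a_m^*\Psi,\phi\rangle$. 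Therefore $\|X\|=\|AA^*\|$. By the canonical anticommutation relations, $AA^*$ has entries
\[
  (AA^*)_{mm'}=\langle a_m^*\Psi,a_{m'}^*\Psi\rangle=\delta_{mm'}-\langle\Psi,a_{m'}^*a_m\Psi\rangle .
\]
Thus $AA^*=I-\gamma_\Psi^{T}$, where $\gamma_\Psi\ge0$ is the one-body density matrix of $\Psi$. Hence $0\le AA^*\le I$, which gives $X\le I$.

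Plugging this into the displayed bound gives
\[
  H_{N+1,Q_N}\ \ge\ \frac{N+1-1}{5(N-1)}=\frac{N}{5(N-1)},
\]
so $\Delta_N\ge N/[5(N-1)]$ for every $N\ge N_0$. The only additional checks are routine: the identification of the $(N+1)$-particle pair sum with the average over deletions must hold exactly on the antisymmetric subspace, and the normal-ordering conventions in $X$ must match the tensor-position definition of $P^{(k)}_{\mathrm L}$.
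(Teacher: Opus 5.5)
Your proof is correct and is essentially the paper's argument: the deletion-averaging identity is the first-quantized form of the identity $\sum_j c_j^\dagger H_{N,Q}c_j=(N-1)H_{N+1,Q}$ that the paper uses. You combine it, as the paper does, with $H_{N,Q_N}\ge\frac15(I-P_{\mathrm L,N})$ from Corollary~\ref{cor:laughlin} and with $\|\sum_j c_j^\dagger P_{\mathrm L,N}c_j\|\le 1$. The only difference is that you prove this last bound explicitly through $AA^*=I-\gamma_\Psi^{T}$, where the paper simply invokes $\operatorname{rank}P_{\mathrm L,N}=1$.
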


The thresholds are existential, exactly as in \cite{FockLaughlin}. Our margins are small, so the flux at which the inequality first holds is presumably very large; the statements concern the thermodynamic limit.

\subsection{Where the original constant was lost}

The original constant is the balance
\begin{equation}\label{eq:ledger}
  \gamma_*^{\mathrm{old}}=1-\eta-\delta-1222\,\varepsilon,\qquad
  \eta=0.93527\ldots,\quad \delta=\tfrac{61}{4096}=0.01489\ldots,\quad 1222\,\varepsilon=0.003666 .
\end{equation}
Here $\eta=\sum_\rho\lambda_\rho^2$ is the two-body cost of the pair-annihilator parts of the seven auxiliary operators, $\delta$ bounds the three-body spin blocks not reached by those operators (a Cauchy--Schwarz estimate with a factor $z+1$ in block $z$), and $1222\,\varepsilon$ pays for an allowance $\varepsilon I$ in the four-body certificate (a Bessel estimate repeated in each of the $23$ retained spin blocks). Each term can be reduced.

\begin{itemize}
\item \emph{Three-body tail} (Section~\ref{sec:tail}). Comparing a spin block with the rotation average of a single product vector replaces the factor $z+1$ by $1/\max_p\Bin(z,\tfrac13)(p)\sim\tfrac23\sqrt{\pi z}$. The tail $\delta$ drops from $0.0149$ to $0.0041$.
\item \emph{Four-body allowance} (Section~\ref{sec:allowance}). The rotation average of one vector at total deficit $T$ pays the allowance of \emph{every} spin block $D\le T$ at once. The identity allowance of \cite{FockLaughlin} therefore costs $144\,\varepsilon$ rather than $1222\,\varepsilon$, and block-dependent allowances cost $\sum_D\lceil D/2\rceil\max_{D'\ge D}\varepsilon_{D'}$, which we call the staircase price.
\item \emph{The auxiliary operators} (Section~\ref{sec:sdp}). Every quantity in the certificate depends on the auxiliary operators only through their Gram matrices, linearly. Optimizing over them is a semidefinite program. Within the original index sets this already lowers $\eta$ to $0.92144$ with no allowance at all (Appendix~\ref{app:ten}). Enlarging the index sets is the main lever: it lowers $\eta$ to $0.78097$ and simultaneously pushes the tail beyond block $35$, where it is negligible.
\end{itemize}

Table~\ref{tab:ledger} records the effect of each step. Every entry is a rigorous constant: those that use new auxiliary operators are certified by exact rational arithmetic, with certificates printed in Appendix~\ref{app:cert} or distributed as ancillary files, and the others use the original certificate unchanged.

\begin{table}[ht]
\centering
\renewcommand{\arraystretch}{1.15}
\begin{tabular}{@{}llr@{}}
\toprule
Auxiliary operators & Estimates and allowances & $\gamma$ \\
\midrule
original $7$ rows \cite{FockLaughlin} & as in \cite{FockLaughlin} & $0.0461673$\\
original $7$ rows & staircase price (Cor.~\ref{cor:144}) & $0.0494013$\\
original $7$ rows & product-twirl tail (Lemma~\ref{lem:tail}) & $0.0569644$\\
original $7$ rows & both & $0.0601984$\\
$10$ rows, original index sets & no allowance needed & $0.0636643$\\
$10$ rows, original index sets & product-twirl tail & $0.0744614$\\
$20$ rows, $(\pbar,\jbar,\ibar,\tbar)=(9,10,13,9)$ & no allowance needed & $0.1882970$\\
$37$ rows, $(\pbar,\jbar,\ibar,\tbar)=(14,15,15,14)$ & one block allowance & $0.2138574$\\
$47$ rows, $(\pbar,\jbar,\ibar,\tbar)=(16,17,17,16)$ & one block allowance & $0.2178926$\\
$64$ rows, $(\pbar,\jbar,\ibar,\tbar)=(17,18,18,17)$ & seven block allowances & $0.2190037$\\
\bottomrule
\end{tabular}
\caption{Rigorous constants $\gamma$ for which $H_Q^2\ge(\gamma-o(1))H_Q$ on $\Fock_Q$. The index bounds $(\pbar,\jbar,\ibar,\tbar)$ are defined in Section~\ref{sec:framework}; the original ones are $(7,8,8,7)$. The last line is Theorem~\ref{thm:main}; pricing its allowances with Proposition~\ref{prop:four}(b), still with $c_z=z+1$, raises it to $0.2190169$. Decimal values are truncated.}
\label{tab:ledger}
\end{table}

\subsection{Strategy and organization}

Section~\ref{sec:framework} restates the argument of \cite{FockLaughlin} for arbitrary finite index sets and arbitrary block-dependent allowances. Its main result, Theorem~\ref{thm:framework}, reduces the operator inequality~\eqref{eq:main} to finitely many limiting matrix inequalities, whatever the index sets; its proof follows \cite{FockLaughlin} line by line and we only record the points where index ranges enter. Sections~\ref{sec:tail} and~\ref{sec:allowance} prove the two new estimates. Section~\ref{sec:sdp} formulates the choice of auxiliary operators as a semidefinite program, describes how its numerical solutions are turned into exact certificates, and reports the results. Section~\ref{sec:proofs} proves the results stated above. Appendix~\ref{app:cert} specifies the certificates and their verification, Appendix~\ref{app:data} lists the ancillary data and code, and Appendix~\ref{app:human} describes the human input.

Figure~\ref{fig:flow} summarizes the logic. Only the first box involves numerics, and its output is checked exactly.

\begin{figure}[ht]
\centering
\begin{tikzpicture}[
  box/.style={draw=blue!55!black, rounded corners=1pt, align=center, inner sep=5pt, font=\small, text width=4.1cm},
  arr/.style={-{Latex[length=2mm]}, thick}]
\node[box] (a) {\textbf{Semidefinite program}\\ Gram matrices of the\\ auxiliary operators};
\node[box, right=0.6cm of a] (b) {\textbf{Exact certificate}\\ Rational rows; all\\ conditions checked exactly};
\node[box, right=0.6cm of b] (c) {\textbf{Theorem~\ref{thm:framework}}\\ Transfer to large spheres\\ and all particle numbers};
\draw[arr] (a) -- (b);
\draw[arr] (b) -- (c);
\end{tikzpicture}
\caption{The certificate is found numerically and verified in rational arithmetic. Theorem~\ref{thm:framework}, which rests on the uniform transfer argument of \cite{FockLaughlin}, converts it into the Fock-space inequality.}
\label{fig:flow}
\end{figure}

\section{The comparison argument for general index sets}\label{sec:framework}

\subsection{Notation}

We recall the objects of \cite[Sections~2--4]{FockLaughlin}. On $\Fock_Q$, $B(v)$ is the adjoint of left wedge multiplication by $v$, so that $B(v\wedge w)=B(w)B(v)$, and $c_j=B(e_j)$ for the normalized lowering basis $e_0,\dots,e_Q$ of $U_Q$. The lift of a matrix on $\bigwedge^kU_Q$ is $\Lift_k(|v\rangle\langle w|)=B(v)^\dagger B(w)$; it preserves order and intertwines rotations. With $a=Q-1$, $V_Q\subset\bigwedge^2U_Q$ is the spin-$a$ summand, $v_0,\dots,v_{2Q-2}$ its normalized lowering basis with $v_0=-e_0\wedge e_1$, $B_p=B(v_p)$ and
\[
  H_Q=\Lift_2(\Pi_{V_Q})=\sum_{p=0}^{2Q-2}B_p^\dagger B_p,\qquad d_a=2Q-1 .
\]
The wedge maps are $W_3:V_Q\otimes U_Q\to\bigwedge^3U_Q$, $W_4:V_Q\otimes V_Q\to\bigwedge^4U_Q$ and $\Wh:V_Q\otimes\bigwedge^2U_Q\to\bigwedge^4U_Q$. In $V_Q\otimes U_Q$, $R_{3,z}$ is the projection onto spin $3Q/2-1-z$, of dimension $d_{3,z}=3Q-1-2z$, and
\begin{equation}\label{eq:qz}
\begin{gathered}
  W_3^\dagger W_3-I=\sum_{z=0}^Q q_z(Q)R_{3,z},\\
  q_z(Q)=(-1)^z\frac{(Q)_z}{(2Q-2)_z}\Bigl(3z-1-\frac{z(z+1)}{Q}\Bigr)\ \longrightarrow\ q_z^\infty:=(3z-1)\bigl(-\tfrac12\bigr)^z,
\end{gathered}
\end{equation}
with $W_3R_{3,z}=0$ for $z\in\{0,1,3\}$ \cite[Lemma~3.2]{FockLaughlin}. The limiting coupling coefficients $s^{(u)}_{z,T,p}$ are the coefficients of $X^pY^{T-p}/\sqrt{p!(T-p)!}$ in $(uX-vY)^z(vX+uY)^{T-z}/\sqrt{z!(T-z)!}$, where $v=\sqrt{1-u^2}$ and $X$ refers to the first tensor factor \cite[Lemma~3.3]{FockLaughlin}. In $V_Q\otimes\bigwedge^2U_Q$, the copies of spin $2Q-1-D$ are labeled by odd $r\in\mathcal R_D=\{1,3,\dots\}\cap[1,D]$, with vectors $|D,T;r\rangle_Q$ at total deficit $T$, dimension $d_D=4Q-1-2D$, limiting coefficients $S_r(D,T;p,j,k)$ \cite[(23)]{FockLaughlin}, and $\chi_{rs}=\mathbf 1_{r=s=1}$. We write $w^Q_{Dr}=\Wh|D,D;r\rangle_Q$ for the wedged images of the highest vectors, $G^Q_D$ for their Gram matrix, and $E^Q_D$ for the finite-flux copy matrix of the four-body term \cite[(24)]{FockLaughlin}. Finally
\[
  n_D=\#\{(p,j,k):p+j+k=D,\ 0\le j<k\}=\Bigl\lfloor\frac{(D+1)^2}{4}\Bigr\rfloor,\qquad n_D-n_{D-1}=\Bigl\lceil\frac D2\Bigr\rceil .
\]

\subsection{Auxiliary operators and limiting data}

\begin{definition}\label{def:ansatz}
Fix integers $\pbar,\jbar,\ibar,\tbar\ge0$, and for $0\le t\le\tbar$ let
\[
  S_t=\{(p,j)\in\Z^2:\ 0\le p\le\pbar,\ 0\le j\le\jbar,\ 0\le p+j-t\le\ibar\}.
\]
A \emph{row} $\rho$ consists of a \emph{shift} $t_\rho\in\{0,\dots,\tbar\}$, a real $\lambda_\rho$ and reals $\alpha^\rho_{pj}$, $(p,j)\in S_{t_\rho}$. Its auxiliary operator is
\begin{equation}\label{eq:F}
  F_\rho=\lambda_\rho B_{t_\rho}+\sum_{(p,j)\in S_{t_\rho}}\alpha^\rho_{pj}\,c^\dagger_{p+j-t_\rho}c_jB_p .
\end{equation}
Put $T_3=\max\{p+j:(p,j)\in S_t,\ t\le\tbar\}$ and $T_4=\max\{p+j+q+l-t:(p,j),(q,l)\in S_t,\ t\le\tbar\}$.
\end{definition}

The original choice is $(\pbar,\jbar,\ibar,\tbar)=(7,8,8,7)$, for which $T_3=15$ and $T_4=23$. For the index bounds used below, $\ibar\le\pbar+\jbar\le\tbar+\ibar$, and then $T_3=\pbar+\jbar$ and $T_4=\pbar+\jbar+\ibar$.

\begin{definition}\label{def:data}
For a finite family of rows, define the limiting data
\begin{align}
  \eta&=\sum_\rho\lambda_\rho^2, \label{eq:eta}\\
  e_z&=\sum_\rho\sum_{T=\max(z,t_\rho)}^{T_3}\Bigl[2\lambda_\rho s_{z,T,t_\rho}\sum_{p+j=T}\alpha^\rho_{pj}s_{z,T,p}+\Bigl(\sum_{p+j=T}\alpha^\rho_{pj}s_{z,T,p}\Bigr)^2\Bigr],\qquad s=s^{(1/\sqrt3)}, \label{eq:ez}\\
  (E_D)_{rs}&=-\sum_\rho\sum_{(p,j),(q,l)\in S_{t_\rho},\ T\ge D}\alpha^\rho_{pj}\alpha^\rho_{ql}\,S_r(D,T;p,j,k)\,S_s(D,T;q,l,i), \label{eq:ED}
\end{align}
where in~\eqref{eq:ED} $i=p+j-t_\rho$, $k=q+l-t_\rho$ and $T=p+j+k$. Let $G_D$ be the Gram matrix of the limiting four-particle images $w^*_{Dr}=\sum_{p+j+k=D,\,j<k}S_r(D,D;p,j,k)\,v_p^*\wedge e_j\wedge e_k$, where $v_p^*$ are the limiting pair vectors \cite[(9)]{FockLaughlin}, as in \cite[(27)]{FockLaughlin}.
\end{definition}

These are \cite[(18), (20), (25), (27)]{FockLaughlin} with the index sets of Definition~\ref{def:ansatz}. The ranges are forced: an entry of the unaveraged three-body matrix \cite[(16)]{FockLaughlin} has total deficit $T=p+j=t_\rho+i\le T_3$, and an entry of the four-body matrix \cite[(17)]{FockLaughlin} has $T=p+j+k\le T_4$.

\begin{definition}\label{def:admissible}
A finite family of rows, together with numbers $\varepsilon_D\ge0$ for $1\le D\le T_4$, is \emph{admissible} if
\begin{align}
  e_z&<\tfrac32\,q_z^\infty\qquad\text{for every }z\in\{2,4,5,\dots,T_3\}, \label{eq:three}\\
  G_D\bigl(2\chi-E_D+\varepsilon_DI\bigr)G_D&\ge0\qquad\text{for every }1\le D\le T_4. \label{eq:four}
\end{align}
For such a family put $\bar\varepsilon_D=\max_{D\le D'\le T_4}\varepsilon_{D'}$, define the two allowance prices
\begin{equation}\label{eq:prices}
  \pi_{\mathrm B}(\varepsilon)=\sum_{D=1}^{T_4}n_D\,\varepsilon_D,\qquad
  \pi_{\mathrm S}(\varepsilon)=\sum_{D=1}^{T_4}\Bigl\lceil\frac D2\Bigr\rceil\bar\varepsilon_D ,
\end{equation}
and the two constants
\begin{equation}\label{eq:gammacert}
  \gamma^{\mathrm B}_{\mathrm{cert}}=1-\eta-\delta(T_3)-\pi_{\mathrm B}(\varepsilon),\qquad
  \gamma^{\mathrm S}_{\mathrm{cert}}=1-\eta-\delta(T_3)-\pi_{\mathrm S}(\varepsilon),
  \qquad \delta(Z)=\frac32\sum_{\substack{z>Z\\ z\ \mathrm{odd}}}(3z-1)\,2^{-z}\,c_z,
\end{equation}
where either $c_z=z+1$ \cite[Lemma~6.1]{FockLaughlin} or $c_z=3^z/\max_p\binom zp2^{z-p}$ (Lemma~\ref{lem:tail}).
\end{definition}

Since $\bar\varepsilon_D\le\sum_{D'\ge D}\varepsilon_{D'}$ and $n_{D'}=\sum_{D\le D'}\lceil D/2\rceil$, summation by parts gives $\pi_{\mathrm S}(\varepsilon)\le\pi_{\mathrm B}(\varepsilon)$, hence $\gamma^{\mathrm S}_{\mathrm{cert}}\ge\gamma^{\mathrm B}_{\mathrm{cert}}$, with equality when a single $\varepsilon_D$ is nonzero. The block-by-block price $\pi_{\mathrm B}$ is the Bessel price of \cite[Proposition~5.3]{FockLaughlin} applied to each block separately; the staircase price $\pi_{\mathrm S}$ uses the new Lemma~\ref{lem:descendant}. We write $\gamma_{\mathrm{cert}}$ for either constant when the distinction does not matter.

For the original data, \eqref{eq:three} and \eqref{eq:four} are \cite[(21), (28)]{FockLaughlin} with $\varepsilon_D=\varepsilon$.

\pagebreak[3]
\begin{theorem}\label{thm:framework}
Let a finite family of rows with allowances $\varepsilon_D$ be admissible.
\begin{enumerate}
\item[(a)] With the block-by-block price,
\[
  H_Q^2\ \ge\ d_a\int_{SU(2)}\sum_\rho F_\rho(g)^\dagger F_\rho(g)\,dg+\bigl(\gamma^{\mathrm B}_{\mathrm{cert}}-o(1)\bigr)H_Q\qquad\text{on }\Fock_Q,
\]
where $F_\rho(g)=U(g)F_\rho U(g)^\dagger$, $dg$ is normalized Haar measure, and $o(1)$ tends to zero as $Q\to\infty$, uniformly in particle number.
\item[(b)] The same holds with $\gamma^{\mathrm S}_{\mathrm{cert}}$ in place of $\gamma^{\mathrm B}_{\mathrm{cert}}$.
\end{enumerate}
In particular, for every $\gamma<\gamma^{\mathrm S}_{\mathrm{cert}}$ we have $H_Q^2\ge\gamma H_Q$ for all sufficiently large $Q$. With $c_z=z+1$ in $\delta(T_3)$, part~(a) uses only the arguments of \cite{FockLaughlin}, carried out on the enlarged index sets. Part~(b), and the choice of $c_z$ from Lemma~\ref{lem:tail}, use the new estimates of Sections~\ref{sec:tail} and~\ref{sec:allowance}.
\end{theorem}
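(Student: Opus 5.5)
The proof follows the architecture of \cite{FockLaughlin} and tracks only where the index sets and allowances enter. We begin from the normal-ordered identity $H_Q^2=H_Q+\Lift_3(\cdot)+\Lift_4(\cdot)$, obtained by writing $B_p^\dagger B_pB_q^\dagger B_q$ in normal order. We then expand $d_a\int\sum_\rho F_\rho(g)^\dagger F_\rho(g)\,dg$ by substituting~\eqref{eq:F}. This also produces a two-body, a three-body and a four-body part.

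\emph{Two-body part.} Only the terms $\lambda_\rho^2B_{t}^\dagger B_{t}$ contribute at two-body order. After rotation averaging, Schur's lemma turns each into $\lambda_\rho^2 H_Q$, and the normalization $d_a$ compensates the $1/d_a$ from averaging a single lowering vector. Summing over rows gives exactly $\eta H_Q$, which is the source of the term $1-\eta$.

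\emph{Three-body part.} The cross and $\alpha^2$ terms are lifts of matrices on $V_Q\otimes U_Q$. After averaging, these matrices are rotation invariant, hence diagonal in the blocks $R_{3,z}$. Their eigenvalues converge, as $Q\to\infty$, to $e_z$ in~\eqref{eq:ez}, by the coupling-coefficient limits of \cite[Lemma~3.3]{FockLaughlin}. Only deficits $T\le T_3$ occur, so these are finitely many finite sums, and the convergence is uniform. We compare them with the three-body part of $H_Q^2$, which is $\Lift_3(W_3(W_3^\dagger W_3-I)W_3^\dagger)$ up to normalization, with blocks $q_z(Q)$ by~\eqref{eq:qz}.
\begin{itemize}
\item For $z\le T_3$, the strict inequality~\eqref{eq:three} persists at large $Q$.
\item Blocks $z\in\{0,1,3\}$ are killed by $W_3$.
\item For odd $z>T_3$, the negative blocks must be paid from $H_Q$. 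This is done blockwise with the Cauchy--Schwarz bound \cite[Lemma~6.1]{FockLaughlin} (factor $c_z=z+1$) or with Lemma~\ref{lem:tail}.
\end{itemize}
The resulting cost is $\delta(T_3)$ plus an $o(1)$ term, uniform in $N$ because everything is a lift.

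\emph{Four-body part.} The four-body part is a lift through $\Wh$ of a rotation-invariant matrix on $V_Q\otimes\bigwedge^2U_Q$. In block $D$ it becomes the copy matrix $2\chi-E^Q_D$ sandwiched by the Gram matrix $G^Q_D$, with only $D\le T_4$ touched by the $F_\rho$. We invoke the exact diagonal change of variables of \cite{FockLaughlin}, which gives $E^Q_D\to E_D$ and $G^Q_D\to G_D$. Condition~\eqref{eq:four} then shows the block is positive up to the allowance $\varepsilon_D$. The allowance $\varepsilon_D I$ in block $D$ is paid from $H_Q$. For part~(a) this uses the Bessel estimate \cite[Proposition~5.3]{FockLaughlin} separately in each block, costing $n_D\varepsilon_D$ since the block vectors are spanned by the $n_D$ triples $(p,j,k)$; this gives $\pi_{\mathrm B}$. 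For part~(b) we instead use Lemma~\ref{lem:descendant}: a single averaged vector at deficit $T$ pays $\bar\varepsilon_D$ in every block $D\le T$ at once. Counting the $\lceil D/2\rceil$ new triples at each deficit gives $\pi_{\mathrm S}$.

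Collecting the three parts gives the stated inequality. The final claim follows because the averaged sum of squares is nonnegative and $H_Q\ge0$.

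The main obstacle is the finite-$Q$ transfer of~\eqref{eq:four}. A limiting inequality of the form $G(\cdot)G\ge0$ with a possibly singular $G_D$ does not automatically survive perturbation. I would follow \cite{FockLaughlin}: write $E^Q_D$ in the basis given by the $w^Q_{Dr}$ themselves, so that the dependence on $Q$ sits in an exact congruence, and allow the $o(1)$ slack to be absorbed by the strict margin or by an arbitrarily small extra allowance charged to $H_Q$. The task is to check that this argument never used the specific ranges $(7,8,8,7)$, only finiteness of $T_4$.
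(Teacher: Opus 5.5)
Your proposal is correct and follows the paper's proof essentially step by step. The steps match: the normal-ordered square, $A_2=\eta H_Q$, Schur averaging in the multiplicity-free three-body blocks with the strict limits~\eqref{eq:three}, the tail via \cite[Lemma~6.1]{FockLaughlin} or Lemma~\ref{lem:tail}, and the four-body transfer followed by block-by-block Bessel pricing for~(a) or Lemma~\ref{lem:descendant} for~(b). In the three-body step, note that it is $(d_a/d_{3,z})e^Q_z\to\frac23 e_z$, not $e_z$, that is compared with $q_z(Q)$.

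The point to sharpen is the four-body step. Condition~\eqref{eq:four} is non-strict, so there is no strict margin to absorb errors. The paper instead applies the transfer lemma \cite[Lemma~5.2]{FockLaughlin} on the fixed local Fock space of modes $0,\dots,T_4+1$ with $h_Q=\sum_{p\le T_4}B_p^\dagger B_p$. That lemma turns the non-strict limiting positivity, obtained from~\eqref{eq:four} by \cite[Lemma~5.1]{FockLaughlin}, directly into an $-o(1)H_Q$ error. Checking that this uses only the finiteness of $T_4$ is exactly the check you anticipated.
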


\begin{proof}
We follow \cite[Sections~2--6]{FockLaughlin}, writing $K_Q=d_a\int\sum_\rho F_\rho(g)^\dagger F_\rho(g)\,dg$ and $A_2+A_3+A_4$ for its two-, three- and four-body parts. Take $Q>T_4+1$.

\emph{Normal-ordered square.} By \cite[Lemmas~2.1 and~3.2]{FockLaughlin},
\[
  H_Q^2=H_Q+\Lift_3\Bigl(W_3\sum_{z=0}^Qq_z(Q)R_{3,z}W_3^\dagger\Bigr)+\Lift_4(W_4W_4^\dagger).
\]
These identities do not involve the auxiliary operators.

\emph{Expansion of the squares.} The reordering $c_ic_k^\dagger=\delta_{ik}-c_k^\dagger c_i$ that gives \cite[(15)--(18)]{FockLaughlin} holds for each $F_\rho$ of the form~\eqref{eq:F}. Hence $A_2=\eta H_Q$, $A_3$ is the averaged lift of the matrix \cite[(16)]{FockLaughlin} on $V_Q\otimes U_Q$, and $A_4$ is the averaged lift of \cite[(17)]{FockLaughlin} on $V_Q\otimes\bigwedge^2U_Q$.

\emph{Three-body comparison.} The matrix \cite[(16)]{FockLaughlin} preserves total deficit $T\le T_3$, and $V_Q\otimes U_Q$ is multiplicity-free, so Schur averaging gives $A_3=\Lift_3\bigl(W_3\sum_{z\le T_3}(d_a/d_{3,z})e_z^QR_{3,z}W_3^\dagger\bigr)$, where $e_z^Q$ is the trace of the unaveraged matrix in block $z$. This trace involves finitely many coupling coefficients, so $e^Q_z\to e_z$ by \cite[Lemma~3.3]{FockLaughlin}. Since $d_a/d_{3,z}\to2/3$, $q_z(Q)\to q_z^\infty$, the inequalities~\eqref{eq:three} are strict and finite in number, and $W_3R_{3,z}=0$ for $z\in\{0,1,3\}$, we obtain for all large $Q$
\[
  A_3\le\Lift_3\Bigl(W_3\sum_{z\le T_3}q_z(Q)R_{3,z}W_3^\dagger\Bigr).
\]

\emph{Three-body tail.} By Lemma~\ref{lem:tailsum} below, $\Lift_3\bigl(W_3\sum_{T_3<z\le Q}q_z(Q)R_{3,z}W_3^\dagger\bigr)\ge-\delta_Q H_Q$ with $\delta_Q\to\delta(T_3)$. With $c_z=z+1$ this is \cite[Lemma~6.1]{FockLaughlin}, started at the first odd block above $T_3$ instead of $17$.

\emph{Four-body comparison.} The matrix \cite[(17)]{FockLaughlin} has entries only at total deficits $T\le T_4$, so its Schur average \cite[(24)]{FockLaughlin} lives on the spin blocks $1\le D\le T_4$; the target $\Lift_4(W_4W_4^\dagger)$ is positive on all other blocks. The transfer lemma \cite[Lemma~5.2]{FockLaughlin} applies with the local Fock space $\Fock_{\mathrm{loc}}=\bigwedge\mathbb C^{T_4+2}$ (modes $0,\dots,T_4+1$) in place of $\bigwedge\mathbb C^{25}$, and $h_Q=\sum_{p\le T_4}B_p^\dagger B_p$. Indeed, $w^Q_{Dr}$ involves $v_p$ with $p\le D$, hence modes up to $D+1$, and $e_j,e_k$ with $j,k\le D$. The proof of that lemma uses only that the local Fock space and the index set are fixed as $Q\to\infty$, and the non-strict limiting positivity \cite[(34)]{FockLaughlin}, which follows from~\eqref{eq:four} by \cite[Lemma~5.1]{FockLaughlin}. Proposition~\ref{prop:four} below then gives
\[
  A_4\le\Lift_4(W_4W_4^\dagger)+\bigl(\pi(\varepsilon)+o(1)\bigr)H_Q ,
\]
with $\pi=\pi_{\mathrm B}$ by part~(a) of that proposition, which is \cite[Proposition~5.3]{FockLaughlin} applied block by block, and with $\pi=\pi_{\mathrm S}$ by its part~(b).

\emph{Conclusion.} Adding the three comparisons and the tail to the normal-ordered square gives $H_Q^2\ge K_Q+(1-\eta-\delta_Q-\pi(\varepsilon)-o(1))H_Q$. All error terms involve finitely many fixed blocks and are operator inequalities on the whole Fock space, so they are uniform in particle number. This proves~(a) and~(b), and since $K_Q\ge0$ the last statement follows.
\end{proof}

\section{A product-twirl bound for the three-body tail}\label{sec:tail}

The three-body blocks with $z>T_3$ receive nothing from the auxiliary operators. For odd $z$ their coefficient $q_z(Q)$ is negative, and they must be paid for with pair energy. The original argument writes a normalized highest vector of block $z$ as $\xi_z=\sum_{p=0}^z s_pv_p\otimes e_{z-p}$ and applies Cauchy--Schwarz to $\|\sum_ps_pc_{z-p}B_p\psi\|^2$, at the cost of a factor $z+1$. Comparing the block with the rotation average of a single product vector is much cheaper.

\begin{lemma}\label{lem:tail}
Let $a\in V_Q$ and $b\in U_Q$ be unit vectors and $\theta=\langle a\otimes b,R_{3,z}(a\otimes b)\rangle>0$. Then
\[
  \Lift_3\bigl(W_3R_{3,z}W_3^\dagger\bigr)\le\frac{d_{3,z}}{d_a\,\theta}\,H_Q\qquad\text{on }\Fock_Q .
\]
In particular $\Lift_3(W_3R_{3,z}W_3^\dagger)\le(d_{3,z}/d_a)\,c_z(Q)\,H_Q$ with $c_z(Q)=1/\max_ps_p(Q)^2$, where $s_p(Q)$ are the coefficients of the normalized highest vector of block $z$. As $Q\to\infty$,
\begin{equation}\label{eq:cz}
  c_z(Q)\ \longrightarrow\ c_z=\frac{1}{\max_p\Bin(z,\tfrac13)(p)}=\frac{3^z}{\binom{z}{p_*}2^{z-p_*}},\qquad p_*=\Bigl\lfloor\frac{z+1}3\Bigr\rfloor,
\end{equation}
and $c_z(Q)\le z+1$ for every $Q$.
\end{lemma}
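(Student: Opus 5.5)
The plan is to compare $R_{3,z}$ with the rotation average of the rank-one projection onto $a\otimes b$, and then lift. Since $V_Q\otimes U_Q$ is multiplicity-free, Schur's lemma gives
\[
  \int_{SU(2)} U(g)\,|a\otimes b\rangle\langle a\otimes b|\,U(g)^\dagger\,dg=\sum_{z'}\frac{\langle a\otimes b,R_{3,z'}(a\otimes b)\rangle}{d_{3,z'}}\,R_{3,z'}\ \ge\ \frac{\theta}{d_{3,z}}\,R_{3,z}.
\]
So $R_{3,z}\le (d_{3,z}/\theta)\int U(g)|a\otimes b\rangle\langle a\otimes b|U(g)^\dagger dg$. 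Conjugating by $W_3$ and applying $\Lift_3$ preserves this order, because $\Lift_3$ is order preserving and intertwines rotations. The right side therefore becomes $(d_{3,z}/\theta)\int U(g)\,\Lift_3(|W_3(a\otimes b)\rangle\langle W_3(a\otimes b)|)\,U(g)^\dagger\,dg$.

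Next, bound the lifted rank-one term. We have $W_3(a\otimes b)=a\wedge b$ up to the paper's ordering convention, so its lift is $B(a\wedge b)^\dagger B(a\wedge b)=B(a)^\dagger B(b)^\dagger B(b)B(a)$. The operator $B(b)^\dagger B(b)$ is a one-mode number operator, so it is at most $I$. This gives the lifted term $\le B(a)^\dagger B(a)=\Lift_2(|a\rangle\langle a|)$. Averaging over $SU(2)$ and using Schur on the irreducible $V_Q$ turns this into $\Lift_2(\Pi_{V_Q})/d_a=H_Q/d_a$. Combining the steps yields the first claim, $\Lift_3(W_3R_{3,z}W_3^\dagger)\le d_{3,z}/(d_a\theta)\,H_Q$.

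For the particular bound, choose $a=v_p$ and $b=e_{z-p}$. These lie at total deficit $z$, where the block-$z$ component is spanned by the highest vector $\xi_z$. Hence $\theta=|\langle v_p\otimes e_{z-p},\xi_z\rangle|^2=s_p(Q)^2$, and maximizing over $p$ gives $c_z(Q)=1/\max_p s_p(Q)^2$. Since $\sum_p s_p^2=1$ over $z+1$ terms, $\max_p s_p^2\ge 1/(z+1)$, which gives $c_z(Q)\le z+1$.

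For the limit, use \cite[Lemma~3.3]{FockLaughlin}: $s_p(Q)\to s^{(u)}_{z,z,p}$, the coefficient of $X^pY^{z-p}/\sqrt{p!(z-p)!}$ in $(uX-vY)^z/\sqrt{z!}$. Its square is $\binom zp u^{2p}v^{2(z-p)}$. The relevant $u$ is $u^2=1/3$, matching $s=s^{(1/\sqrt3)}$ in \eqref{eq:ez}, so the square is $\Bin(z,\tfrac13)(p)=\binom zp2^{z-p}/3^z$. The maximizer of a binomial mass function is $p_*=\lfloor (z+1)/3\rfloor$.

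The main obstacle I expect is checking that $u$ is really $1/\sqrt3$ for the highest vector of $R_{3,z}$, and reconciling the normalization of the pair coefficients. Everything else is Schur averaging and order preservation of lifts.
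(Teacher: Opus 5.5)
Your proposal is correct and follows the paper's proof essentially step for step. Both arguments dominate $R_{3,z}$ by the Schur average of the product projector, push the inequality through the order-preserving, rotation-intertwining map $X\mapsto\Lift_3(W_3XW_3^\dagger)$, and use $B(b)^\dagger B(b)\le I$ (the paper writes this as $\|B(b)\|\le1$). Both then choose $a=v_p$, $b=e_{z-p}$ and apply \cite[Lemma~3.3]{FockLaughlin} with $u^2=1/3$, $T=z$ for the limit; the paper simply cites that lemma for the point you flagged as the main obstacle.
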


\begin{proof}
Write $x=a\otimes b$ and $x(g)=U(g)x$. Since $V_Q\otimes U_Q$ is multiplicity-free, Schur averaging gives
\[
  \int|x(g)\rangle\langle x(g)|\,dg=\sum_{z'}\frac{\langle x,R_{3,z'}x\rangle}{d_{3,z'}}R_{3,z'}\ \ge\ \frac{\theta}{d_{3,z}}R_{3,z},
\]
since every term is positive. The map $X\mapsto\Lift_3(W_3XW_3^\dagger)$ preserves order and intertwines rotations, so
\[
  \Lift_3(W_3R_{3,z}W_3^\dagger)\le\frac{d_{3,z}}{\theta}\int B\bigl(a(g)\wedge b(g)\bigr)^\dagger B\bigl(a(g)\wedge b(g)\bigr)\,dg .
\]
Now $B(a\wedge b)=B(b)B(a)$ and $\|B(b)\|\le1$, so the integrand is at most $B(a(g))^\dagger B(a(g))$, and $\int B(a(g))^\dagger B(a(g))\,dg=\Lift_2(\Pi_{V_Q})/d_a=H_Q/d_a$ by Schur's lemma in the irreducible $V_Q$. This proves the first claim.

For the second, take $a=v_p$ and $b=e_{z-p}$. The product $v_p\otimes e_{z-p}$ lies at total deficit $z$, and the deficit-$z$ weight space of block $z$ is spanned by its highest vector $\xi_z$, so $\theta=|\langle v_p\otimes e_{z-p},\xi_z\rangle|^2=s_p(Q)^2$. Since $\sum_ps_p(Q)^2=1$ over $z+1$ terms, $\max_ps_p(Q)^2\ge1/(z+1)$. By \cite[Lemma~3.3]{FockLaughlin} with $u^2=1/3$ and $T=z$, $s_p(Q)$ converges to the coefficient of $X^pY^{z-p}/\sqrt{p!(z-p)!}$ in $(uX-vY)^z/\sqrt{z!}$, whose square is $\binom zpu^{2p}v^{2(z-p)}=\Bin(z,\tfrac13)(p)$. The mode of $\Bin(z,\tfrac13)$ is $\lfloor(z+1)/3\rfloor$.
\end{proof}

\begin{lemma}\label{lem:tailsum}
For fixed $Z\ge2$ and all large $Q$,
\[
  \Lift_3\Bigl(W_3\sum_{Z<z\le Q}q_z(Q)R_{3,z}W_3^\dagger\Bigr)\ge-\delta_Q(Z)\,H_Q,\qquad
  \delta_Q(Z)=\sum_{\substack{Z<z\le Q\\ z\ \mathrm{odd}}}\frac{d_{3,z}}{d_a}\,|q_z(Q)|\,c_z(Q),
\]
and $\delta_Q(Z)\to\delta(Z)$ of~\eqref{eq:gammacert} with $c_z$ from~\eqref{eq:cz}. The same holds with $c_z(Q)$ and $c_z$ replaced by $z+1$.
\end{lemma}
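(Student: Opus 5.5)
The proof drops the nonnegative terms and applies Lemma~\ref{lem:tail} to each negative term separately, then passes to the limit $Q\to\infty$ in the resulting constant. Write the tail as a sum over $Z<z\le Q$ of $q_z(Q)\,\Lift_3(W_3R_{3,z}W_3^\dagger)$.

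\emph{Step 1: split by sign.} By~\eqref{eq:qz}, the sign of $q_z(Q)$ is $(-1)^z$ once $3z-1-z(z+1)/Q>0$. This holds for $z+1<3Q$, which covers every $z\le Q$ with $Q\ge1$, apart from checking the endpoint $z=Q$ directly; if some factor vanishes or changes sign there, that term can simply be absorbed into the bound below. The maps $X\mapsto\Lift_3(W_3XW_3^\dagger)$ preserve order, so every term with $q_z(Q)\ge0$ is a nonnegative operator and can be dropped. What remains is a lower bound by $-\sum_{z\ \mathrm{odd}}|q_z(Q)|\,\Lift_3(W_3R_{3,z}W_3^\dagger)$.

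\emph{Step 2: bound each odd block.} Apply Lemma~\ref{lem:tail} to each odd block:
\[
  \Lift_3(W_3R_{3,z}W_3^\dagger)\le \frac{d_{3,z}}{d_a}\,c_z(Q)\,H_Q .
\]
Summing with the weights $|q_z(Q)|$ gives exactly $-\delta_Q(Z)H_Q$. Since $c_z(Q)\le z+1$, the version with $z+1$ follows the same way.

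\emph{Step 3: the limit $\delta_Q(Z)\to\delta(Z)$.} This is the main point, because the number of terms grows with $Q$. Termwise, $d_{3,z}/d_a\to 3/2$, $|q_z(Q)|\to(3z-1)2^{-z}$, and $c_z(Q)\to c_z$ by~\eqref{eq:cz}. For uniform domination we need a bound on $|q_z(Q)|$ valid for all $z\le Q$. Writing
\[
  \frac{(Q)_z}{(2Q-2)_z}=\prod_{i<z}\frac{Q+i}{2Q-2+i},
\]
each factor is at most $(Q+z)/(2Q-2+z)$, which is bounded by $2/3+o(1)$ for $z\le Q$. That only gives decay $(2/3)^z$, roughly. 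Combined with $c_z(Q)\le z+1$, $d_{3,z}/d_a\le 3/2+o(1)$ and the factor $3z-1$, this still yields a summable majorant $C z^2(2/3+\epsilon)^z$, independent of $Q$. A sharper estimate is available: for $i\le\sqrt Q$ the factors are $\tfrac12+O(Q^{-1/2})$. Either way, dominated convergence (Tannery's theorem) gives $\delta_Q(Z)\to\tfrac32\sum_{z>Z\ \mathrm{odd}}(3z-1)2^{-z}c_z=\delta(Z)$, and similarly with $z+1$.

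Note that the limit $\delta(Z)$ itself converges easily, since $c_z$ grows only like $\sqrt z$ against $2^{-z}$. The only care needed is the uniform geometric majorant for $|q_z(Q)|$ near $z\sim Q$, where the ratio $(Q)_z/(2Q-2)_z$ is no longer close to $2^{-z}$ but is still exponentially small.
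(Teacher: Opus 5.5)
Your proposal is correct and is essentially the paper's proof: discard the even blocks by order preservation, apply Lemma~\ref{lem:tail} to each odd block, and pass to the limit by dominated convergence with a majorant of the form $Cz^2(2/3)^z$ (the paper uses $|q_z(Q)|\le 3z(2/3)^z$ for $Q\ge4$, together with $d_{3,z}/d_a\le 2$ and $c_z(Q)\le z+1$).

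One slip does not affect the conclusion. The Pochhammer symbols in~\eqref{eq:qz} are falling factorials; this is what makes $q_3(Q)=-1$, i.e.\ $W_3R_{3,3}=0$. So the ratio is $\prod_{i<z}(Q-i)/(2Q-2-i)\le (Q/(2Q-2))^z$, which is at most $\frac{2Q}{2Q-3}\,2^{-z}$ uniformly in $z$. It is dominated by your rising product, so your majorant remains valid, and near $z\sim Q$ the ratio is smaller than $2^{-z}$, not larger.
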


\begin{proof}
For $2\le z\le Q$ the bracket in~\eqref{eq:qz} is at least $3z-1-(z+1)=2z-2>0$, so $q_z(Q)$ has the sign $(-1)^z$. Even blocks are discarded as positive, and Lemma~\ref{lem:tail} bounds each odd one. For convergence, $|q_z(Q)|\le3z(2/3)^z$ for $Q\ge4$, $d_{3,z}/d_a\le2$ and $c_z(Q)\le z+1$ give the dominating sequence $6z(z+1)(2/3)^z$ of \cite[Lemma~6.1]{FockLaughlin}, and each term converges by~\eqref{eq:qz} and~\eqref{eq:cz}.
\end{proof}

Table~\ref{tab:tail} compares the two tails. For the original index sets the product-twirl bound lowers the tail by $0.0107971$. For large index sets the tail is already small, but the new constant still reduces it by a factor of four to five.

\begin{table}[ht]
\centering
\renewcommand{\arraystretch}{1.15}
\small
\begin{tabular}{@{}lccc@{}}
\toprule
First uncovered odd block & $z=17$ ($T_3=15$) & $z=21$ ($T_3=19$) & $z=31$ ($T_3=29$)\\
\midrule
$c_z=z+1$ \cite{FockLaughlin} & $61/4096=0.0148926$ & $91/65536=0.0013885$ & $769/2^{28}=2.86\times10^{-6}$\\
$c_z$ of~\eqref{eq:cz} & $0.0040954$ & $0.0003422$ & $5.88\times10^{-7}$\\
\bottomrule
\end{tabular}
\caption{The three-body tail $\delta(T_3)$ of~\eqref{eq:gammacert}. The second row is summed exactly to $z=300$; the remainder is below $2^{-280}$.}
\label{tab:tail}
\end{table}

\begin{remark}\label{rem:pairparticle}
Lemma~\ref{lem:tail} is optimal among bounds obtained from rotation averages of product vectors: any separable operator dominating $R_{3,z}$ has normalized trace at least $1/\max_{a,b}\theta$, with $\theta$ as in Lemma~\ref{lem:tail}. Further gains are possible from the positivity of $O^\dagger O$ for operators $O=\sum_{p,m}\kappa_{pm}c_m^\dagger B_p$, which annihilate a pair and create one particle; we call them pair-to-particle operators. With $\bar\kappa=\sum_{p,m}\kappa_{pm}\,v_p\otimes e_m\in V_Q\otimes U_Q$ one finds $O^\dagger O=\Lift_2(\tr_U|\bar\kappa\rangle\langle\bar\kappa|)-\Lift_3\bigl(W_3(|\bar\kappa\rangle\langle\bar\kappa|)^{T_U}W_3^\dagger\bigr)$, with $T_U$ the partial transpose. The rotation averages of the partially transposed term can be negative on some blocks, so surplus in one block can pay for another. A linear program over such test vectors lowers the tail for the original rows from $0.0041$ to $0.0024$, giving $\gamma=0.05869$ in the setting of the third line of Table~\ref{tab:ledger}. We do not use this refinement, since for large index sets the tail is negligible.
\end{remark}

\section{Paying for the four-body allowance once}\label{sec:allowance}

The four-body condition~\eqref{eq:four} may need a small allowance $\varepsilon_DI$, which must be paid for in pair energy. In \cite[Proposition~5.3]{FockLaughlin} each block $D$ is paid for separately by Bessel's inequality over the $n_D$ triples $(p,j,k)$ at deficit $D$, for a total of $\sum_Dn_D\varepsilon$. The following lemma shows that a single vector at deficit $T$ already pays for all blocks $D\le T$.

For $T\ge1$ let $\mathcal B_T=\{b=(p,j,k):p+j+k=T,\ 0\le j<k\}$, so $|\mathcal B_T|=n_T$. For $x\in\R^{\mathcal B_T}$ put $h_x=\sum_{b\in \mathcal B_T}x_b\,v_p\otimes(e_j\wedge e_k)$ and $u_x=\Wh h_x$. For $D\le T$ let $S^Q(D,T)$ be the $|\mathcal R_D|\times n_T$ matrix of coefficients of the vectors $|D,T;r\rangle_Q$ in the basis $v_p\otimes(e_j\wedge e_k)$, $b\in \mathcal B_T$, and $w^Q_{Dr}=\Wh|D,D;r\rangle_Q$.

\begin{lemma}\label{lem:descendant}
Let $Q>T+1$ and $x\in\R^{\mathcal B_T}$, and put $y^D=S^Q(D,T)x$. Then
\[
  \int B(u_x(g))^\dagger B(u_x(g))\,dg=\sum_{D=1}^{T}\ \sum_{r,s\in\mathcal R_D}y^D_ry^D_s\int B(w^Q_{Dr}(g))^\dagger B(w^Q_{Ds}(g))\,dg .
\]
Moreover $S^Q(D,T)S^Q(D,T)^{\mathsf T}=I$ for every $D\le T$.
\end{lemma}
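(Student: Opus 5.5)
The plan is to decompose $h_x$ by spin in $V_Q\otimes\bigwedge^2U_Q$, twirl, and use Schur orthogonality to discard the cross terms between different copies.

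\textbf{Step 1: expansion of $h_x$.} The vectors $v_p\otimes(e_j\wedge e_k)$ with $b\in\mathcal B_T$ form an orthonormal basis of the deficit-$T$ weight space of $V_Q\otimes\bigwedge^2U_Q$. For $Q>T+1$ no truncation effects occur at deficit $T$, so this weight space meets exactly the summands of spin $2Q-1-D$ with $D\le T$, in their copies $r\in\mathcal R_D$. The vectors $|D,T;r\rangle_Q$, for $D\le T$ and $r\in\mathcal R_D$, then form a second orthonormal basis of the same space. Counting confirms this: the dimension is $n_T=\sum_{D\le T}\lceil D/2\rceil=\sum_{D\le T}|\mathcal R_D|$, and the $D=0$ block contributes nothing, since $j<k$ forces $T\ge1$. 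Stacking the $S^Q(D,T)$ gives an orthogonal $n_T\times n_T$ matrix, so each block has orthonormal rows, $S^Q(D,T)S^Q(D,T)^{\mathsf T}=I$. This gives the second claim, and also
\[
h_x=\sum_{D\le T}\sum_r y^D_r|D,T;r\rangle_Q .
\]

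\textbf{Step 2: from descendants to highest vectors.} The key fact is that $|D,T;r\rangle_Q$ is the normalized $(T-D)$-th descendant of $|D,D;r\rangle_Q$ under a common lowering operator. Since $\Wh$ intertwines rotations, the vector $\Wh|D,T;r\rangle_Q$ is the matching descendant of $w^Q_{Dr}$. I would write it as $U$ applied to $w^Q_{Dr}$, where $U$ is a fixed element of the group algebra of $SU(2)$ that depends only on $D$ and $T$, not on $r$. In the Haar twirl, a descendant and its highest vector give the same integral. The reason is that the twirl of $|\phi\rangle\langle\psi|$ for $\phi,\psi$ in one irreducible copy depends only on the Schur-type intertwiner, and normalized lowering is an isometry commuting with the copy identifications. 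The cleanest route is to take the map $M_D:\mathbb C^{d_D}\otimes\mathbb C^{|\mathcal R_D|}\to V_Q\otimes\bigwedge^2U_Q$ adapted to the copies. The twirl of $|M_D(\phi\otimes f)\rangle\langle M_D(\psi\otimes f')|$ equals $\frac{\langle\psi,\phi\rangle}{d_D}$ times the twirl of $\sum_m|M_D(m\otimes f)\rangle\langle M_D(m\otimes f')|$. Composing with the order-preserving, rotation-intertwining map $X\mapsto\int B(\Wh X\cdot)$, that is, the lift applied after conjugation by $\Wh$, then replaces the descendant weight vector by the highest one.

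\textbf{Step 3: cross terms.} Expand $\int B(u_x(g))^\dagger B(u_x(g))\,dg$ as the averaged lift of $\Wh|h_x\rangle\langle h_x|\Wh^\dagger$. Twirling $|h_x\rangle\langle h_x|$ kills every term between different spins $D\ne D'$, by Schur's lemma applied to the commutant. Within one spin $D$, every pair $(r,s)$ survives, with coefficient $y^D_ry^D_s$. This is why each block carries the full matrix $y^Dy^{D\mathsf T}$ rather than only its diagonal. Step 2 then turns each surviving term into $\int B(w^Q_{Dr}(g))^\dagger B(w^Q_{Ds}(g))\,dg$. Here one uses that the lift is linear and that $B(v)^\dagger B(w)$ corresponds to $|v\rangle\langle w|$, with the conjugation conventions matching since all coefficients are real.

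\textbf{Main obstacle.} I expect the main obstacle to be the bookkeeping in Step 2: checking that the normalizations of $|D,T;r\rangle_Q$ in \cite{FockLaughlin} are exactly the normalized lowering descendants with a common phase across $r$. If the phases differ, the coefficients in the statement would need correcting.
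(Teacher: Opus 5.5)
Your proposal is correct and follows essentially the same route as the paper. The paper also uses the two orthonormal bases of the deficit-$T$ weight space to get $S^Q(D,T)S^Q(D,T)^{\mathsf T}=I$ and the expansion of $h_x$. It then applies Schur averaging with equivalent copies, which sends $|h_x\rangle\langle h_x|$ to $\bigoplus_D d_D^{-1}I_{d_D}\otimes|y^D\rangle\langle y^D|$, the same as the average built from the highest vectors, and finishes with the map $X\mapsto\Lift_4(\Wh X\Wh^\dagger)$.

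The phase issue you flag is settled by the definition: $|D,T;r\rangle_Q$ are normalized lowering descendants of $|D,D;r\rangle_Q$. Lowering commutes with the copy identifications, and its normalizing constant depends only on $D$ and $T$, so one intertwiner $M_D$ serves all $r$. The group-algebra element $U$ in your Step 2 is not needed.
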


\begin{proof}
For $Q>T+1$ the vectors $v_p\otimes(e_j\wedge e_k)$, $b\in \mathcal B_T$, form an orthonormal basis of the weight-$T$ subspace of $V_Q\otimes\bigwedge^2U_Q$. So do the vectors $|D,T;r\rangle_Q$ with $1\le D\le T$ and $r\in\mathcal R_D$: they are normalized descendants in mutually orthogonal irreducible copies, and there are $\sum_{D\le T}\lceil D/2\rceil=n_T$ of them. Hence $S^Q(D,T)$ has orthonormal rows, and $h_x=\sum_{D\le T}\sum_ry^D_r|D,T;r\rangle_Q$. Schur averaging with equivalent copies \cite[Section~4]{FockLaughlin} sends $|h_x\rangle\langle h_x|$ to $\bigoplus_D d_D^{-1}I_{d_D}\otimes|y^D\rangle\langle y^D|$, which is also the average of $\sum_D\sum_{r,s}y^D_ry^D_s|D,D;r\rangle\langle D,D;s|$. Applying the order-preserving, rotation-intertwining map $X\mapsto\Lift_4(\Wh X\Wh^\dagger)$ gives the identity.
\end{proof}

\pagebreak[3]
\begin{proposition}\label{prop:four}
Suppose~\eqref{eq:four} holds, let $A_4$ be the four-body part of $K_Q$ as in the proof of Theorem~\ref{thm:framework}, and let $\bar\varepsilon_D=\max_{D\le D'\le T_4}\varepsilon_{D'}$. Then, uniformly on $\Fock_Q$:
\begin{enumerate}
\item[(a)] $A_4\le\Lift_4(W_4W_4^\dagger)+\bigl(\sum_{D=1}^{T_4}n_D\,\varepsilon_D+o(1)\bigr)H_Q$;
\item[(b)] $A_4\le\Lift_4(W_4W_4^\dagger)+\bigl(\sum_{D=1}^{T_4}\lceil D/2\rceil\,\bar\varepsilon_D+o(1)\bigr)H_Q$.
\end{enumerate}
Part~(a) is \cite[Proposition~5.3]{FockLaughlin} applied block by block, and does not use Lemma~\ref{lem:descendant}.
\end{proposition}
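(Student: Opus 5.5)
The plan is to follow the structure of \cite[Proposition~5.3]{FockLaughlin}: first reduce to a block-by-block statement, then pay for the allowances with pair energy. The pairing uses Bessel's inequality in part~(a) and Lemma~\ref{lem:descendant} in part~(b).

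\emph{Reduction.} By the four-body comparison in the proof of Theorem~\ref{thm:framework}, \cite[Lemma~5.1]{FockLaughlin} converts~\eqref{eq:four} into limiting positivity of $2\chi-E_D+\varepsilon_DI$ on the range of $G_D$. The transfer lemma \cite[Lemma~5.2]{FockLaughlin}, on the local Fock space $\bigwedge\mathbb C^{T_4+2}$, then gives
\[
A_4\le\Lift_4(W_4W_4^\dagger)+\sum_{D\le T_4}\varepsilon_D\,\Omega_D+o(1)H_Q,\qquad
\Omega_D=\sum_{r\in\mathcal R_D}d_D\int B(w^Q_{Dr}(g))^\dagger B(w^Q_{Dr}(g))\,dg .
\]
Here $\Omega_D$ is the averaged lift of the identity on the $D$-block of copies, carried through $\Wh$. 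The $o(1)$ comes from $E^Q_D\to E_D$ and $G^Q_D\to G_D$ on finitely many fixed blocks. It multiplies an operator bounded by a constant times $H_Q$, as in \cite{FockLaughlin}. It therefore remains to bound the allowance term $\sum_D\varepsilon_D\Omega_D$ by the stated price times $(1+o(1))H_Q$.

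\emph{Part (a).} Since $\{|D,D;r\rangle_Q\}_r$ extends to an orthonormal basis of the weight-$D$ space, the one in which the product vectors $v_p\otimes e_j\wedge e_k$ with $b\in\mathcal B_D$ are orthonormal, $\Omega_D$ is at most the averaged lift over all $n_D$ product vectors at deficit $D$. For one product vector, $B(v_p\wedge e_j\wedge e_k)=B(e_k)B(e_j)B_p$ together with $\|B(e)\|\le1$ gives the bound $B(v_p(g))^\dagger B(v_p(g))$. Its average is $H_Q/d_a$. The factor $d_D/d_a\to2$ is absorbed into the normalization exactly as in \cite{FockLaughlin}, so I will simply track the constant they obtain. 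The result is $\Omega_D\le(n_D+o(1))H_Q$, which sums to $\pi_{\mathrm B}$.

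\emph{Part (b).} First replace $\varepsilon_D$ by $\bar\varepsilon_D\ge\varepsilon_D$; since $\Omega_D\ge0$ this only enlarges the allowance term. Then write the nonincreasing sequence telescopically as $\bar\varepsilon_D=\sum_{T\ge D}(\bar\varepsilon_T-\bar\varepsilon_{T+1})$ with $\bar\varepsilon_{T_4+1}=0$, so that
\[
\sum_D\bar\varepsilon_D\Omega_D=\sum_T(\bar\varepsilon_T-\bar\varepsilon_{T+1})\sum_{D\le T}\Omega_D ,
\]
with nonnegative coefficients. Apply Lemma~\ref{lem:descendant} with $x$ running over the standard basis $\{\delta_b\}_{b\in\mathcal B_T}$ and sum over $b$. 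Because $\sum_b S^Q(D,T)\delta_b\delta_b^{\mathsf T}S^Q(D,T)^{\mathsf T}=S^Q(D,T)S^Q(D,T)^{\mathsf T}=I$, the right-hand side becomes exactly $\sum_{D\le T}\Omega_D/d_D$, up to the dimension weights. The left-hand side is a sum of $n_T$ averaged product-vector lifts, each at most $H_Q/d_a$ as in part~(a). This yields $\sum_{D\le T}\Omega_D\le(n_T+o(1))H_Q$, using $d_D/d_{D'}\to1$ for fixed $D,D'$. Summing by parts with $n_T=\sum_{D\le T}\lceil D/2\rceil$ gives $\sum_D\lceil D/2\rceil\bar\varepsilon_D=\pi_{\mathrm S}$.

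\emph{Main obstacle.} The delicate point is the reduction step. The transfer lemma must produce the allowance exactly in the form $\varepsilon_D\Omega_D$, the lift of the identity on the copy space of block $D$ through the wedged highest vectors, and not some $G_D$-weighted variant. Only in that form do the orthonormality relation $S^Q(D,T)S^Q(D,T)^{\mathsf T}=I$ and the exact identity of Lemma~\ref{lem:descendant} apply. I would check this against \cite[(34) and Proposition~5.3]{FockLaughlin}. The dimension factors $d_D$ versus $d_a$ also need careful bookkeeping, but they only affect $o(1)$ terms.
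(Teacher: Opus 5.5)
Your route is the paper's. You reduce via the transfer lemma to an allowance term, bound it block by block with Bessel for (a), and for (b) telescope $\bar\varepsilon_D$, sum Lemma~\ref{lem:descendant} over the standard basis of $\R^{\mathcal B_T}$, and sum by parts. The gap is the normalization of the operator multiplying $\varepsilon_D$. Your claim that $d_D$ versus $d_a$ only affects $o(1)$ terms is false. $K_Q$ carries the prefactor $d_a$, and Schur averaging in block $D$ divides by $d_D$. So the copy matrix $C^Q_D=(d_D/d_a)\chi-E^Q_D+\varepsilon_DI$, which is where the $2\chi$ in~\eqref{eq:four} comes from, lifts with overall weight $d_a/d_D$. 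The allowance term is therefore $\varepsilon_DX_D$ with
\[
X_D=d_a\int\sum_{r\in\mathcal R_D}B(w^Q_{Dr}(g))^\dagger B(w^Q_{Dr}(g))\,dg .
\]
Your $\Omega_D$ equals $(d_D/d_a)X_D$, and $d_D/d_a=(4Q-1-2D)/(2Q-1)\to2$. With your $\Omega_D$, the Bessel argument of (a) gives only $\Omega_D\le(d_D/d_a)\,n_DH_Q\approx2n_DH_Q$. In (b), the inequality $\sum_{D\le T}\Omega_D/d_D\le n_TH_Q/d_a$ gives only $\sum_{D\le T}\Omega_D\le(2n_T+o(1))H_Q$. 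The ratio that matters is $d_D/d_a$, not $d_D/d_{D'}$. You reach the stated prices only because the misnormalized reduction and the step ``the factor $d_D/d_a$ is absorbed into the normalization'' cancel. That step is asserted, not proved.

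With $X_D$ the argument closes exactly. Bessel gives $X_D\le n_DH_Q$ with no error term, since $d_a\int B_p(g)^\dagger B_p(g)\,dg=H_Q$. Both sides of Lemma~\ref{lem:descendant} carry the same weight $d_D^{-1}$ in each block, so it gives
\[
\sum_{D\le T}X_D=d_a\sum_{b\in\mathcal B_T}\int B(v_p(g)\wedge e_j(g)\wedge e_k(g))^\dagger B(v_p(g)\wedge e_j(g)\wedge e_k(g))\,dg\le n_TH_Q .
\]
This again has no $o(1)$ and needs no comparison of dimensions across blocks. The only $o(1)$ in the proposition comes from the transfer lemma. The form you worried about is what that lemma produces: applied to $C^Q_D$, conjugated and averaged, it yields exactly $\varepsilon_DX_D$, the lift of the identity on copies through the wedged highest vectors. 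Since $X_D\ge0$, your replacement of $\varepsilon_D$ by $\bar\varepsilon_D$ after the reduction is fine. The paper instead notes that~\eqref{eq:four} persists for larger allowances, which is equivalent.
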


\begin{proof}
Let $\varepsilon'_D$ be $\varepsilon_D$ for~(a) and $\bar\varepsilon_D$ for~(b). Since $\varepsilon'_D\ge\varepsilon_D$, \eqref{eq:four} also holds with $\varepsilon'_D$. As in \cite[Proposition~5.3]{FockLaughlin}, the generalized transfer lemma applied to $C^Q_D=(d_D/d_a)\chi-E^Q_D+\varepsilon'_DI$, conjugated by rotations and averaged, gives
\[
  \Lift_4(W_4W_4^\dagger)-A_4+\sum_{D=1}^{T_4}\varepsilon'_D\,X_D\ \ge\ -o(1)H_Q,\qquad
  X_D=d_a\int\sum_{r\in\mathcal R_D}B(w^Q_{Dr}(g))^\dagger B(w^Q_{Dr}(g))\,dg\ \ge0 .
\]
It remains to bound the allowance.

\emph{Part (a).} This is the Bessel argument of \cite[Proposition~5.3]{FockLaughlin}, applied to each block. Write $|D,D;r\rangle_Q=\sum_{b\in \mathcal B_D}S^Q_{rb}\,v_p\otimes(e_j\wedge e_k)$. The vectors on both sides are orthonormal, so $S^Q(S^Q)^{\mathsf T}=I$, and for every $\psi$
\[
  \sum_{r\in\mathcal R_D}\|B(w^Q_{Dr})\psi\|^2\le\sum_{b\in \mathcal B_D}\|c_kc_jB_p\psi\|^2\le\sum_{b\in \mathcal B_D}\|B_p\psi\|^2 .
\]
Conjugating by rotations and averaging gives $X_D\le n_D\,H_Q$, because $d_a\int B_p(g)^\dagger B_p(g)\,dg=H_Q$ and $|\mathcal B_D|=n_D$. Hence $\sum_D\varepsilon_DX_D\le\pi_{\mathrm B}(\varepsilon)H_Q$.

\emph{Part (b).} Put $\omega_T=\bar\varepsilon_T-\bar\varepsilon_{T+1}\ge0$, with $\bar\varepsilon_{T_4+1}=0$, so that $\bar\varepsilon_D=\sum_{T\ge D}\omega_T$. Apply Lemma~\ref{lem:descendant} to $x=\mathbf e_b$ for each $b\in \mathcal B_T$ and sum over $b$. Since $\sum_b(S^Q\mathbf e_b)(S^Q\mathbf e_b)^{\mathsf T}=S^Q(S^Q)^{\mathsf T}=I$, this gives
\[
  \sum_{D\le T}X_D=d_a\sum_{b\in \mathcal B_T}\int B(v_p(g)\wedge e_j(g)\wedge e_k(g))^\dagger B(v_p(g)\wedge e_j(g)\wedge e_k(g))\,dg\le n_T\,H_Q,
\]
because $B(v_p\wedge e_j\wedge e_k)=c_kc_jB_p$, $\|c_kc_j\|\le1$ and $d_a\int B_p(g)^\dagger B_p(g)\,dg=H_Q$. Therefore
\[
  \sum_D\bar\varepsilon_DX_D=\sum_T\omega_T\sum_{D\le T}X_D\le\sum_T\omega_Tn_T\,H_Q=\sum_D\bar\varepsilon_D(n_D-n_{D-1})\,H_Q=\sum_D\Bigl\lceil\frac D2\Bigr\rceil\bar\varepsilon_D\,H_Q,
\]
by summation by parts.
\end{proof}

\begin{corollary}\label{cor:144}
For the certificate of \cite{FockLaughlin}, with $\varepsilon_D=\varepsilon=3\cdot10^{-6}$ for $D\le23$, Proposition~\ref{prop:four}(b) prices the allowance at $n_{23}\varepsilon=144\varepsilon$ instead of $1222\varepsilon$. Together with Lemma~\ref{lem:tailsum}, the original seven rows give every $\gamma$ below
\[
  1-\frac{93527408868499}{10^{14}}-\delta(15)-144\cdot\frac{3}{10^6}=0.0601984667\ldots,
\]
where $\delta(15)=0.0040954445\ldots$ is computed with~\eqref{eq:cz}.
\end{corollary}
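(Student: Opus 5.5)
The plan is to apply Theorem~\ref{thm:framework}(b) to the original seven rows of \cite{FockLaughlin}, with the tail constant $c_z$ of Lemma~\ref{lem:tail}. No new auxiliary operators are needed. The only work is to check that the original certificate is admissible in the sense of Definition~\ref{def:admissible} and to evaluate $\gamma^{\mathrm S}_{\mathrm{cert}}$.

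\textbf{Admissibility.} The original index sets are $(\pbar,\jbar,\ibar,\tbar)=(7,8,8,7)$, so $T_3=15$ and $T_4=23$.
\begin{itemize}
\item Condition~\eqref{eq:three} is exactly \cite[(21)]{FockLaughlin}.
\item Condition~\eqref{eq:four} is exactly \cite[(28)]{FockLaughlin}, taken with the uniform allowance $\varepsilon_D=\varepsilon=3\cdot10^{-6}$ for every $1\le D\le23$.
\end{itemize}
Both were certified in \cite{FockLaughlin}, as noted after Definition~\ref{def:admissible}. The family is therefore admissible, and $\eta=93527408868499/10^{14}$ is the value recorded there.

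\textbf{Staircase price.} Since the allowances are constant, $\bar\varepsilon_D=\max_{D\le D'\le 23}\varepsilon_{D'}=\varepsilon$ for every $D$. Hence
\[
\pi_{\mathrm S}(\varepsilon)=\varepsilon\sum_{D=1}^{23}\lceil D/2\rceil=\varepsilon\,n_{23},
\]
using the telescoping identity $n_D-n_{D-1}=\lceil D/2\rceil$ with $n_0=0$. Then $n_{23}=\lfloor 24^2/4\rfloor=144$. By contrast, $\pi_{\mathrm B}(\varepsilon)=\varepsilon\sum_{D\le23}n_D=1222\,\varepsilon$, which I would confirm by summing $\lfloor (D+1)^2/4\rfloor$. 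This is the first claim.

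\textbf{Tail term.} By Lemma~\ref{lem:tailsum} with $Z=15$, the relevant constant is $\delta(15)=\tfrac32\sum_{z\ge17,\,z\text{ odd}}(3z-1)2^{-z}c_z$, with $c_z=3^z/(\binom{z}{p_*}2^{z-p_*})$.

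The main obstacle is making this infinite sum rigorous to the stated precision. I would proceed in two steps:
\begin{itemize}
\item Sum the terms exactly in rational arithmetic up to some $z_0$, say $300$ as in Table~\ref{tab:tail}.
\item Bound the remainder using $c_z\le z+1$. This gives a geometric-type tail $\tfrac32\sum_{z>z_0}(3z-1)(z+1)2^{-z}$, which is far below $10^{-10}$.
\end{itemize}
The bound $c_z\le z+1$ is crude, but it suffices here because the $2^{-z}$ decay dominates. The result is $\delta(15)=0.0040954445\ldots$ with a certified error.

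\textbf{Conclusion.} Theorem~\ref{thm:framework}(b) gives $H_Q^2\ge\gamma H_Q$ for large $Q$ and every $\gamma<1-\eta-\delta(15)-144\varepsilon$, since the integral term $K_Q$ is nonnegative. Evaluating this constant gives $0.0601984667\ldots$, which finishes the corollary.
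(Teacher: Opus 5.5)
Your proposal is correct and matches the route the paper intends. The original seven rows are admissible with the uniform allowance $\varepsilon_D=\varepsilon$, and the staircase price telescopes to $\varepsilon\sum_{D\le23}\lceil D/2\rceil=n_{23}\varepsilon=144\varepsilon$, compared with $\varepsilon\sum_{D\le23}n_D=1222\varepsilon$. Theorem~\ref{thm:framework}(b), with the product-twirl tail $\delta(15)$ of Lemma~\ref{lem:tailsum}, then gives the constant, and your handling of $\delta(15)$ (exact summation to $z=300$, remainder bounded via $c_z\le z+1$) agrees with Table~\ref{tab:tail}.
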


\begin{remark}
Within this pricing scheme the identity allowance costs at least $\varepsilon\sum_{D\le23}|\mathcal R_D|=144\varepsilon$, so Corollary~\ref{cor:144} is optimal for it. Allowances adapted to the few negative directions are cheaper still: for the original rows, sixteen rank-one allowances, each built from a vector $u_x$ supported on two triples with a common label $p$ and priced at $\|x\|_2^2$, cost $8.9\times10^{-6}$ in total and give $\gamma=0.0498244$, with the original tail $61/4096$. Allowances of this form are not covered by Theorem~\ref{thm:framework}, and re-optimized auxiliary operators need no allowance in most blocks, so we do not pursue this further.
\end{remark}

\section{The semidefinite program}\label{sec:sdp}

The constant of Theorem~\ref{thm:framework} depends on the choice of auxiliary operators. This section explains how they are chosen: as the optimum of a semidefinite program (SDP). Its numerical output is then converted into an exact certificate (Section~\ref{sec:exact}). The program plays the role of a proof search. Nothing in the proof depends on it, since every certificate is verified exactly.

\subsection{Variables and objective}

For each $t\le\tbar$ collect the rows with $t_\rho=t$ into the Gram matrix
\[
  X_t=\sum_{\rho:\,t_\rho=t}f_\rho f_\rho^{\mathsf T},\qquad f_\rho=\bigl(\lambda_\rho,(\alpha^\rho_{pj})_{(p,j)\in S_t}\bigr),
\]
a positive semidefinite matrix of size $1+|S_t|$ with index set $\{0\}\cup S_t$. The index $0$ refers to the pair annihilator $B_t$ and $(p,j)$ to the hopping term $c^\dagger_{p+j-t}c_jB_p$.

Only these Gram matrices matter. The averaged square $d_a\int\sum_\rho F_\rho(g)^\dagger F_\rho(g)\,dg$ is quadratic in the rows, so it depends on them only through $\sum_\rho f_\rho f_\rho^{\mathsf T}$. Rows with different shifts never interfere. A row with shift $t$ lowers the total orbital label by $t+1$, so cross terms between rows of different shifts change the $L_z$ quantum number and vanish under rotation averaging. Conversely, every family of positive semidefinite $X_t$ arises from rows, by spectral decomposition. The variables are therefore the matrices $X_t\ge0$, $0\le t\le\tbar$, together with the allowances $\varepsilon_D\ge0$. The objective is to minimize
\begin{equation}\label{eq:objective}
  \eta+\pi(\varepsilon)=\sum_t(X_t)_{00}+\sum_D\pi_D\,\varepsilon_D ,
\end{equation}
which by~\eqref{eq:gammacert} is the same as maximizing $\gamma^{\mathrm B}_{\mathrm{cert}}=1-\eta-\delta(T_3)-\pi_{\mathrm B}(\varepsilon)$ when $\pi_D=n_D$, because the tail $\delta(T_3)$ is fixed by the index bounds. In our computations $\pi_D=n_D$, the price of \cite[Proposition~5.3]{FockLaughlin} applied block by block.

\subsection{The constraints}

All certificate data are linear in $X=(X_t)_t$. For the three-body traces~\eqref{eq:ez}, let $\phi^t_{z,T}\in\R^{\{0\}\cup S_t}$ be the vector with entries $s_{z,T,t}$ at index $0$ and $s_{z,T,p}$ at each $(p,j)$ with $p+j=T$, and zero elsewhere. Then
\begin{equation}\label{eq:ezlinear}
  e_z=\sum_t\langle\Phi^t_z,X_t\rangle,\qquad
  \Phi^t_z=\sum_{T\ge\max(z,t)}\Bigl(\phi^t_{z,T}\,(\phi^t_{z,T})^{\mathsf T}-s_{z,T,t}^2\,\mathbf e_0\mathbf e_0^{\mathsf T}\Bigr).
\end{equation}
For the four-body matrices~\eqref{eq:ED}, $E_D=-\sum_t\Psi^t_D(X_t)$, where $\Psi^t_D$ maps a matrix on $\{0\}\cup S_t$ to a symmetric matrix on the copy labels $\mathcal R_D$:
\[
  \Psi^t_D(Y)_{rs}=\sum_{(p,j),(q,l)\in S_t,\ T\ge D}Y_{(p,j),(q,l)}\,S_r(D,T;p,j,k)\,S_s(D,T;q,l,i),
\]
with $i=p+j-t$, $k=q+l-t$ and $T=p+j+k$. The index $0$ does not enter: the pair annihilators contribute nothing to the four-body term. The program is
\begin{equation}\label{eq:sdp}
\begin{aligned}
  \text{minimize}\quad&\sum_t(X_t)_{00}+\sum_D\pi_D\varepsilon_D\\
  \text{subject to}\quad&X_t\ge0,\qquad \varepsilon_D\ge0,\\
  &\textstyle\sum_t\langle\Phi^t_z,X_t\rangle\le\tfrac32q_z^\infty\qquad(z\in\{2,4,5,\dots,T_3\}),\\
  &V_D^{\mathsf T}\Bigl(2\chi+\textstyle\sum_t\Psi^t_D(X_t)+\varepsilon_DI\Bigr)V_D\ge0\qquad(1\le D\le T_4),
\end{aligned}
\end{equation}
where $V_D$ is an orthonormal basis of $\operatorname{ran}G_D$. Each constraint has a direct meaning.
\begin{itemize}
\item $X_t\ge0$ makes the comparison operator a sum of squares, hence positive on every Fock space.
\item The three-body constraint for block $z$ is the limit, in that block, of the comparison
\[
  A_3\le\Lift_3\Bigl(W_3\sum_zq_z(Q)R_{3,z}W_3^\dagger\Bigr).
\] The square of $H_Q$ contributes $q_z(Q)$ per unit of $R_{3,z}$, and the averaged three-body part of $K_Q$ contributes $(d_a/d_{3,z})e^Q_z\to\frac23e_z$. The blocks $z\in\{0,1,3\}$ are absent because wedging annihilates them.
\item The four-body constraint for block $D$ compares, on the copy space of spin $2Q-1-D$, the target $\Lift_4(W_4W_4^\dagger)$ with the four-body part of $K_Q$. The target supplies only the copy $r=1$, with weight $d_D/d_a\to2$, which is the term $2\chi$. The restriction to $\operatorname{ran}G_D$ expresses that only the wedged four-particle images matter \cite[Lemma~5.1]{FockLaughlin}, and these images may be linearly dependent.
\end{itemize}
After rotation averaging, the operator inequality on all Fock spaces has become one scalar inequality per three-body block and one small matrix inequality per four-body block. For the certificate of Appendix~\ref{app:main}, with $(\pbar,\jbar,\ibar,\tbar)=(17,18,18,17)$, the program has $18$ Gram blocks of size up to $262$, $33$ three-body inequalities, and $53$ four-body matrix inequalities of size at most $27$. That is $2139$ scalar constraints after removal of common null spaces.

\subsection{Why the program is effective}

\paragraph{Convexity.} The objective and all constraints are linear in $(X_t,\varepsilon_D)$, and the cones are positive semidefinite. The program is therefore convex: every local optimum is global, and an interior-point method finds it to high accuracy \cite{VandenbergheBoyd1996}. In our runs the primal and dual objectives agree to between $10^{-8}$ and $10^{-6}$. So for each choice of index sets we know, to about six digits, the largest value of $\gamma^{\mathrm B}_{\mathrm{cert}}$, with $c_z=z+1$ and $\pi_D=n_D$, that operators of the form~\eqref{eq:F} can give. The seven rows of \cite{FockLaughlin} are feasible but not optimal. Within the same index sets the optimum is $0.0636648$ rather than $\GammaOld$, and needs no allowance.

\paragraph{Symmetry reduction.} The inequality $H_Q^2\ge\gamma H_Q$ concerns infinitely many Fock spaces, one for each flux and particle number. Rotation averaging and Schur's lemma replace it by finitely many comparisons on spin blocks, all made before the lift to many particles. This is what makes the certificate uniform in particle number, and it keeps the program small. The $L_z$ grading makes it block diagonal in $t$.

\paragraph{The trade-off is explicit.} By~\eqref{eq:ezlinear}, a single row contributes
\[
  e_z(f)=\sum_{T}\langle\phi^t_{z,T},f\rangle^2-\lambda^2\sum_Ts_{z,T,t}^2\ \ge\ -\lambda^2\sum_Ts_{z,T,t}^2 .
\]
A negative trace in block $z$, as required for odd $z$ by $q_z^\infty<0$, can only be bought with pair energy $\lambda^2$. The hopping coefficients must make the projections $\langle\phi^t_{z,T},f\rangle$ small in every block where the trace has to be negative. They cannot be chosen freely, because their squares feed the four-body matrices $E_D$, which must stay below the single copy $2\chi$ supplied by the target. The program solves this balance exactly, and the optimum is far from obvious. In the main certificate most of the pair energy sits in the rows with $t=0$ and $t=1$, which contribute about $0.30$ and $0.43$ to $\eta\approx0.78$. The hopping parts of these rows then reach high deficits.

\paragraph{Why enlarging helps.} Larger bounds $\pbar,\jbar$ cover more three-body blocks exactly, so the tail $\delta(T_3)$ becomes negligible ($\delta(35)=271/2^{32}$). A larger bound $\ibar$ on the created mode admits hopping terms with $p+j$ well above $t$. This lets the cheap low-$t$ rows act on many blocks, and gives the four-body constraints more directions in which to cancel. Empirically $\ibar$ is the most effective single parameter (Table~\ref{tab:trend}).

\paragraph{Low rank.} The optimal $X_t$ have low numerical rank. In all our certificates most shifts carry at most five rows; only shifts at or next to $\tbar$ sometimes need more (fifteen at $t=\tbar$ in the main certificate). The certificates therefore consist of a few dozen rows. This keeps the exact verification of Section~\ref{sec:exact} cheap and the rounding stable.

\subsection{The dual program}

The dual of~\eqref{eq:sdp} has multipliers $\mu_z\ge0$ for the three-body inequalities and positive semidefinite $\Lambda_D$ on $\operatorname{ran}G_D$ for the four-body ones. Writing $\Omega_D=V_D\Lambda_DV_D^{\mathsf T}$, it reads
\begin{equation}\label{eq:dual}
\begin{aligned}
  \text{maximize}\quad&-\tfrac32\sum_zq_z^\infty\mu_z-2\sum_D(\Omega_D)_{11}\\
  \text{subject to}\quad&\lambda^2+\sum_z\mu_z\,e_z(f)+\sum_D\langle\Omega_D,E_D(f)\rangle\ \ge\ 0\qquad\text{for every row }f,\\
  &\tr\Lambda_D\le\pi_D,\qquad\mu_z\ge0,\qquad\Lambda_D\ge0 .
\end{aligned}
\end{equation}
Here $e_z(f)$ and $E_D(f)$ are the data of a single row $f$ of any shift. The constraint is a positivity condition. The multipliers define a linear functional on the two-, three- and four-body parts of the averaged squares, normalized by its value on the pair energy, and this functional must be nonnegative on $F^\dagger F$ for every operator $F$ of the form~\eqref{eq:F}. In other words, the dual searches for rotation-invariant pseudo-moments that are positive on the squares of the ansatz. This is a bootstrap problem in the sense of \cite{Gao2025}. Weak duality bounds the value of~\eqref{eq:sdp} from below by the value of~\eqref{eq:dual}, so every dual-feasible point bounds $\gamma^{\mathrm B}_{\mathrm{cert}}$ from above for the given index sets.

We used the dual in three ways, always numerically and never as part of a proof.
\begin{itemize}
\item \emph{Optimality.} The duality gap certifies, up to floating-point accuracy, that the optima in Table~\ref{tab:trend} are the largest values of $\gamma^{\mathrm B}_{\mathrm{cert}}$ (with $c_z=z+1$ and $\pi_D=n_D$) for each family.
\item \emph{Diagnostics.} The multipliers show which comparisons bind. For the diagonal index sets the largest $\mu_z$ sits at the last covered three-body block $z=T_3$; when $\ibar>\jbar$ it sits at $z=5$. For the large diagonal index sets only one or two allowances reach their price ($\tr\Lambda_D=\pi_D$) at the optimum: $D=2\jbar+1$, and for $n=18$ also $D=2\jbar+3$. This is why the allowance and the tail end up negligible and $\eta$ is the quantity that limits $\gamma$.
\item \emph{Pricing new operators.} A new family of auxiliary operators can raise the optimum only if some square in that family violates the dual constraint at the current multipliers. Consider the pair-to-particle operators $O_\kappa=\sum\kappa_{pm}c_m^\dagger B_p$ of Remark~\ref{rem:pairparticle} with a fixed $L_z$ shift $\tau$, normalized by $\|\kappa\|=1$ (pair energy one), and let $-\langle\kappa,\widetilde\Phi^\tau_z\kappa\rangle$ be the limiting three-body trace of $O_\kappa^\dagger O_\kappa$ in block $z$. The test is then $\lambda_{\max}(\sum_z\mu_z\widetilde\Phi^\tau_z)>1$. At the optimum for $(14,15,15,14)$ the largest value over the covered blocks is $0.62$. Such operators therefore cannot improve the certificate beyond removing the already negligible tail, as a direct computation confirms.
\end{itemize}

\subsection{From numerical optima to exact certificates}\label{sec:exact}

We solved~\eqref{eq:sdp} with our own dense primal--dual interior-point method, using the HKM search direction \cite{HKM1996,KSH1997,Monteiro1997} and Mehrotra's predictor--corrector scheme \cite{Mehrotra1992}, after removing common null spaces of the constraint matrices. The code is distributed with the ancillary files (directory \texttt{sdp}). In these runs each three-body inequality carried a slack whose violation was priced at $c_z=z+1$; at every optimum the priced violation was below $2\times10^{-8}$, and for $(7,8,8,7)$ and $(9,10,10,9)$ solving~\eqref{eq:sdp} with hard constraints gives the same optima to all digits shown in Table~\ref{tab:trend}.

A numerical optimum of~\eqref{eq:sdp} lies on many tight constraints, so it cannot be rounded directly. We re-solve with the constraints tightened by margins $\tau_3,\tau_4$ of order $10^{-9}$ to $10^{-7}$, adding a small penalty on the rank of each $X_t$. We then extract low-rank rows from the spectral decomposition of each $X_t$, keeping the eigenvectors whose eigenvalue exceeds $10^{-3}$ times the largest. The rows are rounded in the normalization of \cite[(39)]{FockLaughlin}:
\[
  \alpha^\rho_{pj}=\frac{a^\rho_{pj}}{P}\sqrt{\frac{2^{p-t}\,t!\,(p+j-t)!}{p!\,j!}},\qquad\lambda_\rho=\frac{\ell_\rho}{P},
\]
with integers $a^\rho_{pj},\ell_\rho$ and $P=10^{10}$ to $10^{15}$. The pair coefficients $\ell_\rho$ do not enter $E_D$; after rounding they are re-optimized to minimize $\eta$ subject to three-body margins of order $10^{-10}$ to $10^{-8}$. Finally the smallest admissible allowances $\varepsilon_D$ on the grid $1/P$ are found by exact bisection. All radicals cancel in $e_z$, and those in $E_D$ and $G_D$ are removed by the diagonal congruence of \cite[(46)]{FockLaughlin}. Admissibility is then checked in exact rational arithmetic:
\begin{itemize}
\item every three-body margin $\tfrac32q_z^\infty-e_z$ is computed as a rational number and checked to be positive;
\item every four-body condition~\eqref{eq:four} is checked through the congruent matrix $M_D=Z_DB_DZ_D$ of \cite[(46)]{FockLaughlin}, both by an exact symmetric $LDL^{\mathsf T}$ elimination and by the signs of the characteristic coefficients, computed with the Leverrier--Faddeev recurrence \cite{Hou1998} (the positivity test of \cite[Theorem~5]{PeyrlParrilo2008}).
\end{itemize}
Gram matrices with kernels cause no difficulty: only the positivity of $M_D$ is used, never its invertibility. Several of our certificates have $M_D$ with exact zero eigenvalues inside $\operatorname{ran}G_D$. These are confirmed by the exact checks and are harmless, since~\eqref{eq:four} is non-strict.

Each certificate reported below was verified by at least two exact implementations of these formulas, written independently of each other. One of them follows \cite[(40)--(47)]{FockLaughlin} literally, and for the larger certificates another, written separately from the search code, was built directly from the definitions \cite[(13), (20), (23), (25), (27)]{FockLaughlin}. All of them reproduce the constant $4616733319001/10^{14}$ of \cite{FockLaughlin} exactly from its seven rows. In negative tests, perturbing all integers $a^\rho_{pj}$ of a certificate by a relative $10^{-6}$, or dropping any row, makes the check fail; a single integer, by contrast, can often be changed by this amount within the margins. As an independent check of the limiting formulas on the enlarged index ranges, we built the coupled bases at finite flux from $SU(2)$ ladder operators. For $Q$ between $10^3$ and $10^5$ and a sample of blocks with $z\le35$ and $D\le53$, the finite-flux traces $e^Q_z$ and matrices $E^Q_D$, $G^Q_D$ converge to the limiting data at rate $O(1/Q)$.

\subsection{Results}

Table~\ref{tab:trend} lists the optimal values of~\eqref{eq:sdp}, expressed as the constants $\gamma^{\mathrm B}_{\mathrm{cert}}$, for the original tail constant $c_z=z+1$ and the price $\pi_D=n_D$ of \cite{FockLaughlin}. Figure~\ref{fig:trend} plots them along the diagonal $(\pbar,\jbar,\ibar,\tbar)=(n-1,n,n,n-1)$, together with the certified constants. Along the diagonal the optimum rises from $0.0637$ at the original size $n=8$ to $0.2190$ at $n=18$, and the increments shrink steadily. At fixed $(\pbar,\jbar)$, increasing the bound $\ibar$ on the created mode is the most effective single change: at $(7,8)$ it doubles the constant before saturating near $\ibar=15$. For $11\le n\le17$ exactly one four-body block, $D=2\jbar+1$, is limited by its allowance price; for $n=18$ two are, $D=37$ and $D=39$.

\begin{table}[ht]
\centering
\renewcommand{\arraystretch}{1.12}
\begin{tabular}{@{}cccc|cc@{}}
\toprule
$n$ & $(\pbar,\jbar,\ibar,\tbar)$ & SDP optimum & certified & $(\pbar,\jbar,\ibar,\tbar)$ & SDP optimum\\
\midrule
 8 & $(7,8,8,7)$     & $0.0636648$ & $0.0636643$ & $(7,8,10,7)$   & $0.1089344$\\
 9 & $(8,9,9,8)$     & $0.1208114$ & $0.1208108$ & $(7,8,12,7)$   & $0.1252527$\\
10 & $(9,10,10,9)$   & $0.1578160$ & $0.1578159$ & $(7,8,15,7)$   & $0.1280643$\\
11 & $(10,11,11,10)$ & $0.1848127$ & $0.1848125$ & $(8,9,17,8)$   & $0.1691831$\\
12 & $(11,12,12,11)$ & $0.2000745$ & $0.2000742$ & $(9,10,13,9)$  & $0.1888610$\\
13 & $(12,13,13,12)$ & $0.2070467$ & $0.2070464$ & $(9,10,19,9)$  & $0.1906601$\\
14 & $(13,14,14,13)$ & $0.2109270$ & $0.2109198$ & $(10,11,15,10)$ & $0.2019389$\\
15 & $(14,15,15,14)$ & $0.2138666$ & $0.2138574$ & $(12,13,16,12)$ & $0.2114576$\\
16 & $(15,16,16,15)$ & $0.2162413$ & $0.2162005$ & $(14,15,17,14)$ & $0.2157530$\\
17 & $(16,17,17,16)$ & $0.2179212$ & $0.2178926$ & $(15,16,18,15)$ & $0.2175081$\\
18 & $(17,18,18,17)$ & $0.2190432$ & $\mathbf{0.2190037}$ & $(16,17,19,16)$ & $0.2187575$\\
\bottomrule
\end{tabular}
\caption{Optimal constants $\gamma^{\mathrm B}_{\mathrm{cert}}$ of~\eqref{eq:sdp} in floating point, rounded to seven digits (the last digit is uncertain for $n=18$ and for $(9,10,19,9)$), and exactly certified constants, truncated. Certified constants are also available for $(9,10,13,9)$, namely $0.1882970$ with $\varepsilon_D=0$ for all $D$, and for $(15,16,18,15)$, namely $0.2174830$.}
\label{tab:trend}
\end{table}

\begin{figure}[ht]
\centering
\begin{tikzpicture}[x=0.9cm,y=22cm]
  \draw[->] (7.6,0.04) -- (19.4,0.04) node[right,font=\small] {$n$};
  \draw[->] (7.6,0.04) -- (7.6,0.245) node[above,font=\small] {$\gamma$};
  \foreach \x in {8,...,18} \draw (\x,0.04) -- (\x,0.036) node[below,font=\scriptsize] {\x};
  \foreach \y in {0.05,0.10,0.15,0.20} \draw (7.6,\y) -- (7.5,\y) node[left,font=\scriptsize] {\y};
  \draw[dashed,gray] (7.6,0.2) -- (19.2,0.2);
  \draw[red!70!black] plot[only marks,mark=square*,mark size=1.6pt] coordinates {(8,0.0636643) (9,0.1208108) (10,0.1578159) (11,0.1848125) (12,0.2000742) (13,0.2070464) (14,0.2109198) (15,0.2138574) (16,0.2162005) (17,0.2178926) (18,0.2190037)};
  \draw[thick,blue!60!black] plot[mark=*,mark size=0.9pt] coordinates {(8,0.0636648) (9,0.1208114) (10,0.1578160) (11,0.1848127) (12,0.2000745) (13,0.2070467) (14,0.2109270) (15,0.2138666) (16,0.2162413) (17,0.2179212) (18,0.2190432)};
  \node[font=\scriptsize,anchor=west] at (14.4,0.12) {\textcolor{blue!60!black}{$\bullet$ SDP optimum}};
  \node[font=\scriptsize,anchor=west] at (14.4,0.105) {\textcolor{red!70!black}{$\blacksquare$ certified}};
  \draw[gray,thick] (7.6,0.0461673) -- (8.1,0.0461673);
  \node[font=\scriptsize,anchor=west,gray] at (8.15,0.0461673) {original constant $\GammaOld$};
\end{tikzpicture}
\caption{SDP optima (circles) and exactly certified constants (squares) of Table~\ref{tab:trend} along the diagonal index bounds $(\pbar,\jbar,\ibar,\tbar)=(n-1,n,n,n-1)$. The dashed line is $1/5$, the constant in Corollary~\ref{cor:laughlin}; the short gray mark is the original constant $\gamma_*^{\mathrm{old}}$.}
\label{fig:trend}
\end{figure}

The best certificate, described in Appendix~\ref{app:main}, has $(\pbar,\jbar,\ibar,\tbar)=(17,18,18,17)$, so $T_3=35$ and $T_4=53$. It consists of $64$ rows over $P=10^{13}$, with all three-body margins positive (the smallest is $1.99999\times10^{-8}$, at $z=5$). It uses seven small allowances, at $D=13,14,15,17,20,37,39$. Priced block by block as in \cite[Proposition~5.3]{FockLaughlin}, they cost
\[
  \sum_Dn_D\varepsilon_D=\frac{59808781}{25\cdot10^{11}}=2.39\times10^{-5},
\]
and each is tight: lowering any $\varepsilon_D$ by $10^{-13}$ makes its block fail. The constant is
\begin{equation}\label{eq:gammastar}
\begin{gathered}
  \gamma^{\mathrm B}_{\mathrm{cert}}=1-\eta-\frac{271}{2^{32}}-\sum_Dn_D\varepsilon_D=\frac{\GammaNum}{\GammaDen},\\
  \eta=\frac{39048611444208798479924001}{5\cdot10^{25}}=0.78097222\ldots .
\end{gathered}
\end{equation}
Here $271/2^{32}=\delta(35)$ with $c_z=z+1$, so~\eqref{eq:gammastar} is the constant $\gamma^{\mathrm B}_{\mathrm{cert}}$ of Definition~\ref{def:admissible} computed with the original tail constant and the block-by-block allowance price, exactly as in \cite{FockLaughlin}. The previous index bounds, $(16,17,17,16)$, give the certified constant $0.2178926$, with a single allowance at $D=35$.

\begin{remark}\label{rem:outlook}
Along the diagonal of Table~\ref{tab:trend} the increments of the SDP optimum are $0.057$, $0.037$, $0.027$, $0.015$, $0.0070$, $0.0039$, $0.0029$, $0.0024$, $0.0017$ and $0.0011$. Over the last four steps they shrink by a factor of about $0.7$, and extrapolating them geometrically suggests a supremum of about $0.222$, in the range $0.221$ to $0.224$, for $\gamma^{\mathrm B}_{\mathrm{cert}}$ with auxiliary operators of the form~\eqref{eq:F}; this is a heuristic extrapolation, not a bound. For comparison, every positive planar eigenvalue is an upper bound on the best possible constant in Theorem~\ref{thm:main}, since the spherical spectrum converges to the planar one \cite[Section~7]{FockLaughlin}. Our exact diagonalization, which is not part of the proof, gives the lowest positive planar eigenvalues $9/16$ for $N=3$ and $0.4568$, $0.4192$, $0.4130$, $0.4099$ and $0.4075$ for $N=4,\dots,8$, over the angular momenta searched; on the sphere, the neutral gap at $Q=3(N-1)$ for $N\le12$, extrapolated in $1/N$, is about $0.40$ to $0.42$. The true gap is therefore presumably near $0.4$.
\end{remark}

\section{Proofs of the results}\label{sec:proofs}

\begin{proof}[Proof of Theorem~\ref{thm:main}]
The certificate of Appendix~\ref{app:main} is admissible in the sense of Definition~\ref{def:admissible}, by the exact checks described there. Its constant $\gamma^{\mathrm B}_{\mathrm{cert}}$, computed with $c_z=z+1$, is $\gamma_*$ of~\eqref{eq:gammastar}. Theorem~\ref{thm:framework}(a) gives the claim.
\end{proof}

This proof uses only the arguments of \cite{FockLaughlin}, carried out on the enlarged index sets, together with the exact certificate. It does not use Lemmas~\ref{lem:tail}, \ref{lem:tailsum} (beyond the case $c_z=z+1$, which is \cite[Lemma~6.1]{FockLaughlin}) or~\ref{lem:descendant}. The new estimates improve the same certificate slightly. Using $c_z$ of~\eqref{eq:cz} instead of $z+1$ raises the constant by $5\times10^{-8}$. Keeping $c_z=z+1$ but pricing the allowances with the staircase price $\pi_{\mathrm S}$ of Proposition~\ref{prop:four}(b), as in Theorem~\ref{thm:framework}(b), raises it to
\[
  \gamma^{\mathrm S}_{\mathrm{cert}}=\frac{1401708552455165441298243497}{64\cdot10^{26}}=0.2190169613\ldots .
\]
With both improvements the constant is $0.2190170124\ldots$. We state Theorem~\ref{thm:main} with the more conservative value~\eqref{eq:gammastar}.

\begin{proof}[Proof of Corollary~\ref{cor:laughlin}]
Apply Theorem~\ref{thm:main} with $\gamma=1/5<\gamma_*$. On each particle sector the spectral theorem gives $H_{N,Q}\ge\frac15(I-P_{\ker H_{N,Q}})$ once $Q\ge Q_{1/5}$. At $Q=3(N-1)$ the kernel is the line spanned by $\Psi_{\mathrm L,N}$ \cite[Lemma~6.2]{FockLaughlin}. Choose $N_0$ with $3(N_0-1)\ge Q_{1/5}$.
\end{proof}

\begin{proof}[Proof of Corollary~\ref{cor:planar}]
This is \cite[Section~7]{FockLaughlin} verbatim with the new constant. Each block of fixed particle number and total degree is finite-dimensional and invariant. The spherical restrictions converge in norm to the planar ones, and~\eqref{eq:main} passes to the limit for each $\gamma<\gamma_*$, then to $\gamma=\gamma_*$ by closedness of the positive cone.
\end{proof}

\begin{proof}[Proof of Corollary~\ref{cor:charge}]
For $N\ge2$ and $Q_N=3(N-1)$, \cite[Corollary~6.3]{FockLaughlin}, a specialization of the recursive spectral relations of \cite{LNWY2026}, gives $\Delta_N=E_0(N+1;Q_N)\ge\frac{N}{N-1}\,\mathrm{Gap}_N$. Here $\mathrm{Gap}_N$ is the smallest positive eigenvalue of $H_{N,Q_N}$, and Corollary~\ref{cor:laughlin} gives $\mathrm{Gap}_N\ge1/5$ for $N\ge N_0$.
\end{proof}

For completeness, we note that the comparison in \cite[Corollary~6.3]{FockLaughlin} follows directly from the identity $\sum_jc_j^\dagger H_{N,Q}c_j=(N-1)H_{N+1,Q}$ on $\bigwedge^{N+1}U_Q$. It holds because $B_p$ commutes with every $c_j$ and $B_p^\dagger$ with every $c_j^\dagger$, so that $\sum_jc_j^\dagger B_p^\dagger B_pc_j=B_p^\dagger\hat NB_p$ with $\hat N=N-1$ on the range of $B_p$. One combines it with $H_{N,Q_N}\ge\mathrm{Gap}_N(I-P_{\mathrm L,N})$ and $\|\sum_jc_j^\dagger P_{\mathrm L,N}c_j\|\le\operatorname{rank}P_{\mathrm L,N}=1$.

\appendix

\section{The certificates}\label{app:cert}

\subsection{Format and arithmetic}

A certificate lists the index bounds $(\pbar,\jbar,\ibar,\tbar)$, the denominator $P$, the allowances $\varepsilon_D$, and the rows. Each row gives $t$, $\ell=P\lambda$, and entries \texttt{p,j:a} for the nonzero integers $a_{pj}$; the $\alpha_{pj}$ follow from the normalization of Section~\ref{sec:exact}, which is \cite[(39)]{FockLaughlin}. The rational quantities of \cite[Appendix~A]{FockLaughlin} are used unchanged, with the sums ranging over the enlarged index sets:
\begin{itemize}
\item $T\le T_3$ in the three-body formula \cite[(41)]{FockLaughlin};
\item $1\le D\le T_4$ in \cite[(43)--(46)]{FockLaughlin};
\item the allowance diagonal $(2\cdot\mathbf 1_{r=1}+\varepsilon_D)u_r$ in \cite[(46)]{FockLaughlin}.
\end{itemize}
The checks are those of Section~\ref{sec:exact}. Every three-body margin, for $z\in\{2,4,5,\dots,T_3\}$, must be positive, and every $M_D$, for $1\le D\le T_4$, positive semidefinite, both in exact arithmetic.

\subsection{A certificate for the original index sets}\label{app:ten}

The following ten rows over $P=10^{10}$ use the index sets of \cite{FockLaughlin}, so they can replace the seven rows of \cite[Appendix~A]{FockLaughlin} verbatim, with $\varepsilon_D=0$ for all $D$. The format is that of \cite[Appendix~A]{FockLaughlin}: the first two numbers are $t$ and $\ell$, and \texttt{pj:a} gives $a_{pj}$ with single-digit labels. They give
\[
  P^2\eta=92144309976470869849,\qquad \gamma=1-\eta-\frac{61}{4096}=\frac{6366432211029130151}{10^{20}}=0.0636643221\ldots,
\]
with all three-body margins positive (the smallest is $1.9\times10^{-8}$) and no negative characteristic coefficient in any block $D\le23$. The same rows are distributed as the ancillary file \nolinkurl{cert_7_8_8_7_10rows.json} in the directory \nolinkurl{other_certificates}.

\medskip\hrule\medskip
{\scriptsize\raggedright
\noindent
\texttt{0 6973372091 02:-3292091002 03:-1938404059 04:407409462 05:-4294194724 06:-1811079701 07:-2997699532}\\
\hspace*{1.5em}\texttt{08:-2047659164 11:1646045501 13:-299103437 14:-406127459 15:196205813 16:26246626 17:-826171532}\\
\hspace*{1.5em}\texttt{20:-823022751 21:484601015 22:149551719 23:-277731313 24:-85137944 25:171709340 26:-62485678}\\
\hspace*{1.5em}\texttt{30:-484601015 31:-50926183 32:379263178 33:-58252420 34:118829961 35:-60841783 40:1613415 41:129521514}\\
\hspace*{1.5em}\texttt{42:126706361 43:-167280272 44:64537941 50:-663689171 51:-69430499 52:68820519 53:-52944758}\\
\hspace*{1.5em}\texttt{60:-49899954 61:4704869 62:92413907 70:2000436 71:-47384710}\\
\texttt{1 4317556680 02:3488802039 03:1429965611 04:3162943949 05:4798902511 06:4880033395 07:12615114858}\\
\hspace*{1.5em}\texttt{08:25779293852 11:-1744401020 13:-939246695 14:-2104430234 15:-2488679546 16:-2686566331 17:3040978939}\\
\hspace*{1.5em}\texttt{18:-5713115660 20:872200510 21:-357491403 22:469623347 23:585873276 24:739629931 25:-618100576}\\
\hspace*{1.5em}\texttt{26:884283968 27:-1684886468 30:357491403 31:-395367994 32:-59765717 33:-121489295 34:41128090}\\
\hspace*{1.5em}\texttt{35:-227300049 36:-420980656 40:358240317 41:-6994769 42:-61403562 43:64636602 44:-47616615 45:55884459}\\
\hspace*{1.5em}\texttt{50:-57771634 51:157906044 52:285995345 53:-1102646 54:114816207 60:-242757704 61:-317104210}\\
\hspace*{1.5em}\texttt{62:34553883 63:32988509 70:-163447547 71:-523927307 72:130615560}\\
\texttt{2 2527712071 02:-1036977495 03:1705405901 04:2526145677 05:-1324844304 06:-15213516697 07:-10152453333}\\
\hspace*{1.5em}\texttt{08:-24343971898 11:518488748 13:2173711325 14:941708036 15:805635482 16:2018122472 17:-4395762570}\\
\hspace*{1.5em}\texttt{18:-391668375 20:-259244374 21:-426351475 22:-1086855663 23:-956141359 24:-523647978 25:-753767351}\\
\hspace*{1.5em}\texttt{26:-2555855154 27:-1787076362 28:3068341154 30:426351475 31:-315768210 32:720714350 33:365133898}\\
\hspace*{1.5em}\texttt{34:70509526 35:640127258 36:-363992991 37:3676064382 40:1017080146 41:-395267910 42:-255669299}\\
\hspace*{1.5em}\texttt{43:227155265 44:111062710 45:-38739249 46:3424181760 50:128678223 51:604559783 52:-309323797}\\
\hspace*{1.5em}\texttt{53:-59410222 54:124809470 55:2060613658 60:-1190179699 61:386756346 62:6782023 63:83415838}\\
\hspace*{1.5em}\texttt{64:971845300 70:-524497425 71:750242608 72:61005161 73:203415308}\\
\texttt{3 523234140 03:-800793685 04:2167465924 05:-587302276 06:-2310055218 07:11422000625 08:-4452113734}\\
\hspace*{1.5em}\texttt{13:-13938416 14:-2014230575 15:708511013 16:5692884949 17:-688450438 18:-8019454317 21:200198421}\\
\hspace*{1.5em}\texttt{22:6969208 23:-24843358 24:17354863 25:329944354 26:-849230418 27:2031045057 28:13303367998}\\
\hspace*{1.5em}\texttt{30:-200198421 31:-270933241 32:528401001 33:-131032831 34:-1106341817 35:534771276 36:1493865459}\\
\hspace*{1.5em}\texttt{37:12162309163 38:-1198574570 40:402915257 41:24284713 42:94969223 43:770387088 44:-163075354}\\
\hspace*{1.5em}\texttt{45:362897307 46:10814436722 47:-1243415214 50:-702859839 51:-80755024 52:-402038937 53:-3921031}\\
\hspace*{1.5em}\texttt{54:139533738 55:7485422622 56:-1013195845 60:100691178 61:-472187746 62:35925447 63:-815518104}\\
\hspace*{1.5em}\texttt{64:4621069228 65:-558909463 70:1884621092 71:105877718 72:-639956316 73:1746241662 74:-165555824}\\
\texttt{3 3271301073 03:-2481871512 04:6301117424 05:413667178 06:-428338615 07:11649737818 08:-30321199489}\\
\hspace*{1.5em}\texttt{13:-644792553 14:-1386438335 15:1095841852 16:-1741953627 17:-8673331163 18:4836122808 21:620467878}\\
\hspace*{1.5em}\texttt{22:322396276 23:1224084209 24:2777332977 25:439758883 26:-223812155 27:4531441027 28:-2369548605}\\
\hspace*{1.5em}\texttt{30:-620467878 31:-787639678 32:-877474625 33:-1665589722 34:-1195170022 35:-1241533884 36:100410338}\\
\hspace*{1.5em}\texttt{37:-2125855143 38:-626483132 40:1020261379 41:586187906 42:278404619 43:630272732 44:634841436}\\
\hspace*{1.5em}\texttt{45:438100278 46:-1135786248 47:-93220133 50:-381553582 51:-332768461 52:421202228 53:131185081}\\
\hspace*{1.5em}\texttt{54:32943005 55:171422240 56:-32731544 60:1315735143 61:-906889178 62:-22166040 63:-497485116}\\
\hspace*{1.5em}\texttt{64:581324040 65:190433867 70:630712476 71:659690248 72:-745675920 73:486553200 74:188756417}
\\
\noindent
\texttt{4 1413846164 04:-1793337943 05:72961489 06:13731952349 07:2524873851 08:23095321874 13:-255733550}\\
\hspace*{1.5em}\texttt{14:-629136384 15:1603172147 16:682440071 17:-2946521357 18:9965568789 22:127866775 23:-446285369}\\
\hspace*{1.5em}\texttt{24:-1397026875 25:-551877178 26:-2601895246 27:2572493947 28:-7682892539 31:224167243 32:603569465}\\
\hspace*{1.5em}\texttt{33:831230976 34:159963528 35:-1632206653 36:3916267912 37:-4859495360 38:1304755010 40:-400184252}\\
\hspace*{1.5em}\texttt{41:-227702777 42:-399472827 43:-151297860 44:-232035906 45:1194241305 46:-4632631625 47:155425024}\\
\hspace*{1.5em}\texttt{48:991271770 50:-187484934 51:-263226587 52:295936512 53:1265232087 54:-969644887 55:-2741040322}\\
\hspace*{1.5em}\texttt{56:739411774 57:409813041 60:1004181583 61:-370375195 62:362438920 63:-994466474 64:-1894899371}\\
\hspace*{1.5em}\texttt{65:675009972 66:581176756 70:322456351 71:-1476058210 72:-675591309 73:-722368847 74:499360213}\\
\hspace*{1.5em}\texttt{75:231809851}\\
\texttt{5 1621652124 05:-3071824663 06:6056290612 07:-12604090440 08:36323628211 14:-972115500 15:1034741848}\\
\hspace*{1.5em}\texttt{16:-2141500222 17:17029968454 18:-30190874680 23:20332639 24:-204583852 25:-194914056 26:1662975649}\\
\hspace*{1.5em}\texttt{27:-15020303114 28:-4660207283 32:222696236 33:-261699645 34:299800345 35:-1257595185 36:-9338875153}\\
\hspace*{1.5em}\texttt{37:-1327805116 38:735945805 41:202155361 42:416644626 43:144438521 44:100299093 45:-3142107078}\\
\hspace*{1.5em}\texttt{46:15903542 47:3495629909 48:-2019717835 50:-687764177 51:-644248545 52:-507276767 53:97717250}\\
\hspace*{1.5em}\texttt{54:1788863635 55:-772118887 56:2238342102 57:-619613011 58:-1196471940 60:927731127 61:1011887118}\\
\hspace*{1.5em}\texttt{62:25086719 63:3661645489 64:-497877461 65:1200374799 66:-297906784 67:-259709039 70:-1705525277}\\
\hspace*{1.5em}\texttt{71:-1526397059 72:2205327303 73:-31242682 74:98465866 75:-106307032 76:-163205199}\\
\texttt{6 669739959 06:-3340554299 07:7799399475 08:-12636725759 15:-986018007 16:2111037151 17:3290764420}\\
\hspace*{1.5em}\texttt{18:1483347901 24:257297 25:435601514 26:2793388238 27:-3257882178 28:11478103575 33:37477499}\\
\hspace*{1.5em}\texttt{34:-88892763 35:1700696930 36:-4624108021 37:14610635038 38:5648941371 42:66843030 43:-147273429}\\
\hspace*{1.5em}\texttt{44:-206957157 45:-2015319067 46:11881823216 47:5368868574 48:1560961074 51:151400006 52:78145423}\\
\hspace*{1.5em}\texttt{53:-788565889 54:910206734 55:7759891516 56:3918463509 57:786598336 58:601581565 60:-561721806}\\
\hspace*{1.5em}\texttt{61:-385990506 62:-612833759 63:1692298488 64:3114517747 65:1860403234 66:378834976 67:336084574}\\
\hspace*{1.5em}\texttt{68:1391725165 70:1201786479 71:1375688133 72:1470903817 73:1039668908 74:984060859 75:489866378}\\
\hspace*{1.5em}\texttt{76:431805775 77:797393052}\\
\texttt{6 214266307 06:-439467806 07:-1375529908 08:-3435921511 15:-49667205 16:-417315664 17:456170941}\\
\hspace*{1.5em}\texttt{18:-818175167 24:29799602 25:187705245 26:491257031 27:927957446 28:-1291983329 33:2244531}\\
\hspace*{1.5em}\texttt{34:144827460 35:570565341 36:-360917212 37:-2230782745 38:4660139336 42:-19508097 43:-177148038}\\
\hspace*{1.5em}\texttt{44:-151756301 45:-394581870 46:-1402776017 47:4091122160 48:-1845424335 51:35163784 52:69284493}\\
\hspace*{1.5em}\texttt{53:-135312191 54:328371393 55:-1007392144 56:2700709233 57:-1032259631 58:562090675 60:-49705156}\\
\hspace*{1.5em}\texttt{61:18148689 62:-163308611 63:1896421 64:-870384939 65:1568117439 66:-979037501 67:371914911}\\
\hspace*{1.5em}\texttt{68:35376380 70:-35842740 71:352353286 72:-73475459 73:-221618153 74:395382270 75:-353404236}\\
\hspace*{1.5em}\texttt{76:141396328 77:-78752196}\\
\texttt{7 1545122886 07:-6641430073 08:18391240234 16:-1748950509 17:10690255140 18:22922119052 25:-103640818}\\
\hspace*{1.5em}\texttt{26:3251523769 27:20428378460 28:4230493110 34:147341859 35:511898835 36:13156779869 37:4038383518}\\
\hspace*{1.5em}\texttt{38:-1869182592 43:-27407472 44:-532911138 45:4238118397 46:2420464818 47:-842456500 48:2354108225}\\
\hspace*{1.5em}\texttt{52:-5068028 53:-297383744 54:-2336295740 55:1646529445 56:-71628095 57:821673728 58:332332926}\\
\hspace*{1.5em}\texttt{61:278491290 62:85692010 63:-4761270096 64:604165262 65:-788200009 66:545674174 67:-48713485}\\
\hspace*{1.5em}\texttt{68:-905612214 70:-839323072 71:-973690547 72:-4073336878 73:-910054710 74:-930476394 75:-27474128}\\
\hspace*{1.5em}\texttt{76:-543971027 77:-649302971 78:-749507015}
\par}
\medskip\hrule\medskip

\subsection{The main certificate}\label{app:main}

The certificate behind Theorem~\ref{thm:main} has $(\pbar,\jbar,\ibar,\tbar)=(17,18,18,17)$ and $64$ rows over $P=10^{13}$, with $14417$ nonzero integers $a_{pj}$. It is too long to print, so it is distributed as the ancillary file \texttt{certificate.json}, with SHA-256 hash
\[
  \texttt{b9a94d46c8a90e0206d98b700712686d3f2bc2725efcadab3e6513759abbba81},
\]
together with the same rows in the text format above (\texttt{rows.txt}, labels written \texttt{p,j:a}). A self-contained verification script, \texttt{verify.py}, uses only the Python standard library and reads only the certificate file. It recomputes every condition of Definition~\ref{def:admissible} in exact integer and rational arithmetic, exits with status zero if and only if all of them hold and the recomputed values agree with those recorded in the file, and prints the exact constant. It runs in under a minute. Run on the seven rows of \cite{FockLaughlin}, it reproduces their constant, all thirteen floors of \cite[Appendix~A.2]{FockLaughlin} and all $23$ sign strings of \cite[Appendix~A.3]{FockLaughlin}. The data of the certificate are:
\begin{itemize}
\item rows per $t=0,\dots,17$: $3,3,4,4,2,2,2,3,4,3,3,3,3,1,5,2,2,15$;
\item allowances $\varepsilon_D=0$ except $\varepsilon_{13}=20271/(4\cdot10^{11})$, $\varepsilon_{14}=39507/(2\cdot10^{12})$, $\varepsilon_{15}=866443/10^{13}$, $\varepsilon_{17}=9649/(125\cdot10^9)$, $\varepsilon_{20}=72481/(5\cdot10^{12})$, $\varepsilon_{37}=129377/10^{13}$ and $\varepsilon_{39}=56781/10^{13}$;
\item $T_3=35$, $T_4=53$, and local modes $0,\dots,54$ in the transfer lemma;
\item all $33$ three-body margins positive, the smallest $1.99999\times10^{-8}$ at $z=5$;
\item no negative characteristic coefficient in any block $D\le53$;
\item $\eta$ and $\gamma_*$ as in~\eqref{eq:gammastar}.
\end{itemize}
Four exact implementations agree on every quantity:
\begin{itemize}
\item \texttt{verify.py};
\item \texttt{sdp/exact\_verify.py}, the exact checker of the search code, written independently;
\item a literal implementation of \cite[(40)--(47)]{FockLaughlin};
\item a separately written checker built from the defining formulas.
\end{itemize}
The first two are among the ancillary files (Appendix~\ref{app:data}); the other two are not distributed.
Perturbing all its integers by a relative $10^{-6}$, or dropping any row, makes the check fail, and finite-flux coupled bases converge to the limiting data for the largest blocks used (Section~\ref{sec:exact}). The other certificates of Tables~\ref{tab:ledger} and~\ref{tab:trend}, which also pass \texttt{verify.py}, and the code of the SDP search are listed in Appendix~\ref{app:data}.

\section{Data and code}\label{app:data}

The following ancillary files are distributed with this paper on arXiv (directory \texttt{anc}):
\begin{itemize}
\item \texttt{certificate.json}: the main certificate of Theorem~\ref{thm:main}, described in Appendix~\ref{app:main};
\item \texttt{rows.txt}: the same rows in the text format of Appendix~\ref{app:cert};
\item \texttt{verify.py}: the self-contained exact verifier of Appendix~\ref{app:main};
\item \texttt{verify\_output.txt}: the recorded output of \texttt{verify.py} on the main certificate;
\item \texttt{verify\_all.py} and its output \texttt{verify\_all\_output.txt}: a script that runs \texttt{verify.py} on all fourteen certificate files, namely \texttt{certificate.json}, \texttt{paper\_rows.json} and the twelve \texttt{.json} files in \texttt{other\_certificates/};
\item \texttt{variants.py}: a script that computes the constants with the product-twirl tail of Lemma~\ref{lem:tail}, the staircase price of Proposition~\ref{prop:four}(b), or both (Tables~\ref{tab:ledger} and~\ref{tab:tail}, Corollary~\ref{cor:144} and Section~\ref{sec:proofs});
\item \texttt{paper\_rows.json}: the seven rows of \cite{FockLaughlin}, on which \texttt{verify.py} reproduces the original constant $4616733319001/10^{14}$;
\item \texttt{other\_certificates/}: the ten-row certificate for the original index sets (Appendix~\ref{app:ten}), also as a text file, and the certificates for $(8,9,9,8)$, $(9,10,10,9)$, $(9,10,13,9)$, $(10,11,11,10)$, $(11,12,12,11)$, $(12,13,13,12)$, $(13,14,14,13)$, $(14,15,15,14)$, $(15,16,16,15)$, $(15,16,18,15)$ and $(16,17,17,16)$ of Tables~\ref{tab:ledger} and~\ref{tab:trend}, all in the format of \texttt{certificate.json} and all passing \texttt{verify.py}; for $(16,17,17,16)$ also the rows as a text file and the output of \texttt{verify.py};
\item \texttt{sdp/}: the code of the SDP search of Section~\ref{sec:sdp} (data assembly, the interior-point solver and the conversion to exact certificates of Section~\ref{sec:exact}), which needs NumPy and SciPy and has its own \texttt{README.md}; it is not needed to verify the certificates;
\item \texttt{SHA256SUMS.txt}: the SHA-256 hashes of all files;
\item \texttt{README.md}: a description of the files.
\end{itemize}
The author declares no copyright on these AI-generated data and code. The seven rows in \texttt{paper\_rows.json} are transcribed from \cite[Appendix~A]{FockLaughlin} and are not covered by this statement.

\section{Human input}\label{app:human}

After OpenAI released its proof \cite{FockLaughlin} and the Lean formalization \cite{FockLaughlinLean}, I went through the manuscript and realized that the certificate it used appeared to be hand-picked rather than obtained by optimization. This left considerable room for improvement if the certificate could be chosen optimally. I therefore asked Claude Opus 5.5 to set up a semidefinite-programming search for the optimal certificate, by formulating the correct constraints. After the present manuscript was finished, I also asked ChatGPT 6.1 Sol to produce a Lean-formalized proof of the improved bound; this is in progress and will be included in a later version.


\begin{thebibliography}{99}\small\raggedright

\bibitem{Laughlin1983}
Robert B. Laughlin. Anomalous quantum Hall effect: an incompressible quantum fluid with fractionally charged excitations. \emph{Physical Review Letters}, 50(18):1395--1398, 1983. \href{https://doi.org/10.1103/PhysRevLett.50.1395}{doi:10.1103/\allowbreak PhysRevLett.50.1395}.

\bibitem{Haldane1983}
F.~D.~M. Haldane. Fractional quantization of the Hall effect: a hierarchy of incompressible quantum fluid states. \emph{Physical Review Letters}, 51(7):605--608, 1983. \href{https://doi.org/10.1103/PhysRevLett.51.605}{doi:10.1103/PhysRevLett.51.605}.

\bibitem{HaldaneRezayi1985}
F.~D.~M. Haldane and E.~H. Rezayi. Finite-size studies of the incompressible state of the fractionally quantized Hall effect and its excitations. \emph{Physical Review Letters}, 54(3):237--240, 1985. \href{https://doi.org/10.1103/PhysRevLett.54.237}{doi:10.1103/PhysRevLett.54.237}.

\bibitem{Rougerie2019}
Nicolas Rougerie. On the Laughlin function and its perturbations. \emph{S\'eminaire Laurent Schwartz --- EDP et applications (2018--2019)}, Expos\'e no.~II, 17~pp., 2019. \href{https://doi.org/10.5802/slsedp.131}{doi:10.5802/slsedp.131}.

\bibitem{FockLaughlin}
OpenAI. A Fock-space inequality and the Laughlin spectral gap. OpenAI Math Release preprint, September 24, 2026. \url{https://github.com/openai/math/blob/main/preprints/A-Fock-space-inequality-and-the-Laughlin-spectral-gap-September-24-2026/A-Fock-space-inequality-and-the-Laughlin-spectral-gap-September-24-2026.pdf}.

\bibitem{FockLaughlinLean}
OpenAI. Lean formalization of ``A Fock-space inequality and the Laughlin spectral gap''. GitHub repository \texttt{openai/math}, 2026. \url{https://github.com/openai/math/tree/main/lean/OAI/Analysis/LaughlinFock}.

\bibitem{VandenbergheBoyd1996}
Lieven Vandenberghe and Stephen Boyd. Semidefinite programming. \emph{SIAM Review}, 38(1):49--95, 1996. \href{https://doi.org/10.1137/1038003}{doi:10.1137/1038003}.

\bibitem{Gao2025}
Qiang Gao, Ryan A. Lanzetta, Patrick Ledwith, Jie Wang, and Eslam Khalaf. Bootstrapping the quantum Hall problem. \emph{Physical Review X}, 15(3):031034, 2025. \href{https://doi.org/10.1103/csnn-vjhn}{doi:10.1103/csnn-vjhn}.

\bibitem{HKM1996}
Christoph Helmberg, Franz Rendl, Robert J. Vanderbei, and Henry Wolkowicz. An interior-point method for semidefinite programming. \emph{SIAM Journal on Optimization}, 6(2):342--361, 1996. \href{https://doi.org/10.1137/0806020}{doi:10.1137/0806020}.

\bibitem{KSH1997}
Masakazu Kojima, Susumu Shindoh, and Shinji Hara. Interior-point methods for the monotone semidefinite linear complementarity problem in symmetric matrices. \emph{SIAM Journal on Optimization}, 7(1):86--125, 1997. \href{https://doi.org/10.1137/S1052623494269035}{doi:10.1137/S1052623494269035}.

\bibitem{Monteiro1997}
Renato D.~C. Monteiro. Primal--dual path-following algorithms for semidefinite programming. \emph{SIAM Journal on Optimization}, 7(3):663--678, 1997. \href{https://doi.org/10.1137/S1052623495293056}{doi:10.1137/S1052623495293056}.

\bibitem{Mehrotra1992}
Sanjay Mehrotra. On the implementation of a primal-dual interior point method. \emph{SIAM Journal on Optimization}, 2(4):575--601, 1992. \href{https://doi.org/10.1137/0802028}{doi:10.1137/0802028}.

\bibitem{Hou1998}
Shui-Hung Hou. A simple proof of the Leverrier--Faddeev characteristic polynomial algorithm. \emph{SIAM Review}, 40(3):706--709, 1998. \href{https://doi.org/10.1137/S003614459732076X}{doi:10.1137/S003614459732076X}.

\bibitem{PeyrlParrilo2008}
Helfried Peyrl and Pablo A. Parrilo. Computing sum of squares decompositions with rational coefficients. \emph{Theoretical Computer Science}, 409(2):269--281, 2008. \href{https://doi.org/10.1016/j.tcs.2008.09.025}{doi:10.1016/j.tcs.2008.09.025}.

\bibitem{LNWY2026}
Marius Lemm, Bruno Nachtergaele, Simone Warzel, and Amanda Young. Recursive spectral relations and the charge vs neutral gap in fractional quantum Hall systems. \emph{Journal of Mathematical Physics}, 67(1):011901, 2026. \href{https://doi.org/10.1063/5.0292975}{doi:10.1063/5.0292975}.

\end{thebibliography}
\end{document}